\newtheorem{theo}{Theorem}
\newtheorem{prop}{Proposition}
\newtheorem{defi}{Definition}
\newtheorem{lemm}{Lemma}
\begin{document}

\author{
Yusuke Kinoshita
\thanks{
  Graduate School of Informatics, Nagoya University, Nagoya, Aichi, Japan, kinoshita.yusuke@j.mbox.nagoya-u.ac.jp
}
}
\title{Analysis of Quantum Multi-Prover Zero-Knowledge Systems: Elimination of the Honest Condition and Computational Zero-Knowledge Systems for $\sf QMIP^*$}
\maketitle
\begin{abstract}
Zero-knowledge and multi-prover systems are both central notions in classical and quantum complexity theory. There is, however, little research in quantum multi-prover zero-knowledge systems. This paper studies complexity-theoretical aspects of the quantum multi-prover zero-knowledge systems. This paper has two results:
\begin{itemize}
\item$\mathsf{QMIP^*}$ systems with honest zero-knowledge can be converted into general zero-knowledge systems without any assumptions.
\item$\mathsf{QMIP^*}$ has computational quantum zero-knowledge systems if a natural computational conjecture holds. 
\end{itemize}
One of the main tools is a test (called the GHZ test) that uses GHZ states shared by the provers, which prevents the verifier's attack in the above two results.
Another main tool is what we call  the Local Hamiltonian based Interactive protocol (LHI protocol). The LHI protocol makes previous research for Local Hamiltonians applicable to check the history state of interactive proofs, and we then apply Broadbent et al.'s zero-knowledge protocol for QMA \cite{BJSW} to quantum multi-prover systems in order to obtain the second result.
\end{abstract}
\thispagestyle{empty}
\newpage
\setcounter{page}{1}
\section{Introduction}
\subsection{Background}
Zero-knowledge systems are interactive proof systems that the prover (with unlimited computational power) can convince the verifier (with polynomial-time computational power) of the correctness of the claim without leakage of information. By ``without leakage'', we mean that there exists a polynomial-time simulator whose output is indistinguishable from the communication of the interactive proof system. Depending on the definition of the indistinguishability, there are several variants of zero-knowledge: perfect/statistical/computational zero-knowledge, that is, the output of the simulator is identical/statistically close/computational indistinguishable to that of the interactive proof system. It has been a central notion since it was proposed by Goldwasser, Micali, and Rackoff \cite{GMR}. Though it seems impossible, Graph Isomorphism \cite{GMW} and Non Quadratic Residue \cite{GMR} have perfect zero-knowledge systems, for example. 

Multi-prover interactive proof systems are introduced in the relation to zero-knowledge, which is also a central notion in computational theory and cryptography. Multi-prover interactive proof systems are interactive proof systems with many provers, who cannot communicate with each other during the protocols \cite{Bab,BGKW,GMR}. Each prover has to reply to the verifier without the information about communications between the other provers and the verifier, and this restricts malicious strategies by the provers. The class of languages that have multi-prover interactive proofs is denoted as $\mathsf{MIP}$. The computational equivalence $\mathsf{MIP=NEXP}$ was proved by Babai, Fortnow and Lund \cite{BFL}, and hence $\mathsf{IP}$, which is the class of languages that have single-prover interactive proofs ($\mathsf{=PSPACE}$  \cite{Sh}), and $\mathsf{MIP}$ are different unless $\mathsf{PSPACE=NEXP}$. Under the statistical or perfect zero-knowledge condition, multi-prover interactive proofs are much stronger than single-prover interactive proofs. Zero-knowledge proofs of NP with a single-prover need the assumption that one-way functions exist \cite{OW}, but multi-prover interactive proofs do not need any computational assumptions. Even multi-prover interactive proofs with perfect zero-knowledge can compute all languages in $\mathsf{MIP}$ \cite{BGKW}.

In quantum complexity theory, quantum analogues of these interactive proof systems have been also deeply studied. Quantum interactive proof system, where the prover and the verifier use quantum communication and the verifier can do polynomial-time quantum computations, were introduced by Watrous \cite{W0}. Quantum statistical zero-knowledge systems were also introduced by Watrous \cite{W1}. An important notion of zero-knowledge systems is the honest zero-knowledge, that is, an interactive proof system is honest zero-knowledge if it leaks no information only to the honest verifier who follows the specified protocol. Restricting the verifier to be honest makes the construction of the simulator easier, but the equivalence between general zero-knowledge and honest zero-knowledge is not obvious. This equivalence was proved by Watrous for statistical zero-knowledge by using a quantum statistical zero-knowledge hard problem \cite{W2}. Kobayashi proved this equivalence for perfect/statistical/computational zero-knowledge by the direct construction of simulators \cite{Ko}. Multi-prover interactive proof systems with entangled provers and the complexity class $\mathsf{MIP^*}$ were introduced by Cleve et al. \cite{CHTW}. $\mathsf {MIP^*}$ is radically different from classical $\mathsf{MIP}$ due to the non-locality. Quantum multi-prover interactive proofs, where the verifier is also quantum, were introduced by Kobayashi and Matsumoto \cite{KM}, and the corresponding class is denoted by $\sf QMIP^*$.

Despite the importance of quantum zero-knowledge and quantum multi-prover interactive proof systems, there is little research about quantum multi-prover interactive proof systems with zero-knowledge. Recently Chiesa et al. \cite{CFGS} proved that $\mathrm{MIP^*}$ protocols with perfect zero-knowledge can compute $\mathsf{NEXP}$, which is the only literature that analyzed the power of $\mathsf{MIP^*}$  with zero-knowledge as far as the author knows. %Chiesa et al. \cite{CFGS} uses the algebraic method, which is improved in the classical complexity theory and recently improved in quantum settings to prove $\mathsf{NEXP}\subseteq \sf{MIP^*}$\cite{IV} and to improve the parameters \cite{NV,Vid}.%by \cite{BFL,LFKN,Sh}

\subsection{Results}
In this paper, we investigate complexity-theoretical properties of $\rm QMIP^*$ protocols with zero-knowledge, different from $\rm MIP^*$ protocols with zero-knowledge in \cite{CFGS}. The first result is the elimination of the honest condition of quantum multi-prover zero-knowledge systems ($\rm QMZK^*$) without any assumptions. 
\begin{theo}[Informal version of Theorem 4]
Any language L computed by honest zero-knowledge quantum multi-prover interactive proof systems can be computed by general zero-knowledge quantum multi-prover interactive proof systems.
\end{theo}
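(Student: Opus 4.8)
\noindent\emph{Proof plan.}
The plan is to build a generic compiler that turns the honest-verifier protocol into one that is zero-knowledge against arbitrary polynomial-time verifiers, by forcing every verifier either to behave exactly like the prescribed (honest) verifier on some uniformly random coin string $r$, or else to be caught cheating. Concretely, I would first put the given honest zero-knowledge $\mathsf{QMIP}^*$ system $\Pi$ for $L$ into a normal form in which the honest verifier is deterministic given a random string $r$, so that ``$V$ is honest with coins $r$'' becomes a predicate the provers can check round by round from the public transcript. The compiled protocol $\Pi'$ then has two parts: (1) a preliminary phase in which the verifier and the provers jointly generate $r$ by a coin-flipping subprotocol --- the provers first commit to a string $r_P$, the verifier then broadcasts a string $r_V$, the provers open $r_P$, and $r := r_P\oplus r_V$; and (2) an execution of $\Pi$ in which, after each verifier message, every prover aborts unless the message equals the honest-verifier message determined by $r$ and the history so far. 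Acceptance is as in $\Pi$.

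The delicate ingredient is the commitment used in part~(1): it must be binding against provers sharing arbitrary entanglement (otherwise malicious provers could choose $r_P$ after seeing $r_V$, fix a bad $r$, and defeat soundness of $\Pi$), hiding against the verifier (otherwise the verifier could bias $r$), and --- crucially for the simulator --- equivocable by a party that controls the provers' joint state. This is exactly what the GHZ test is meant to provide: the provers are required to hold GHZ states, a Mermin-style pseudo-telepathy test (perfect completeness, constant soundness gap, amplified by repetition) certifies --- via a rigidity argument --- that a successful prover strategy is close to honestly using genuine GHZ states, and the commitment is then extracted from measurements on those states. Binding follows from the rigidity/monogamy of GHZ correlations under the no-communication constraint; hiding is immediate; and equivocability holds because a simulator impersonating the provers may prepare the shared registers in any computational-basis product state it likes and thus ``open'' the commitment to any value.

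Given $V^*$, the simulator $S^*$ would run part~(1) against $V^*$ ``in reverse'': commit to an arbitrary $\tilde r_P$, receive $r_V$ from $V^*$, invoke the honest-verifier simulator $S(x)$ to obtain a simulated transcript of $\Pi$ together with the honest coins $r^*$ it is consistent with, then use equivocability to open the commitment to $r_P := r^*\oplus r_V$, and finally splice in $S$'s transcript for part~(2); if $V^*$ ever sends a message inconsistent with $r$, $S^*$ outputs the aborting view, matching the real interaction. Since a quantum $V^*$ cannot be naively rewound, I would route the step where the opened value must be made (nearly) independent of $V^*$'s internal state through Watrous's quantum rewinding lemma, checking that the relevant success probability is suitably input-independent, and I would track abort probabilities so that the flavour of zero-knowledge (perfect, statistical, or computational) and the completeness and soundness parameters of $\Pi$ are preserved, matching the claim for each variant.

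I expect the main obstacle to be the simultaneous analysis of the GHZ-based multi-prover commitment: proving it binding against entangled provers requires a quantitatively good rigidity/self-testing statement for the GHZ test that survives interleaving with $\Pi$ and the no-signalling structure across all provers, while proving the simulator correct requires this same commitment to be cleanly equivocable and compatible with quantum rewinding --- reconciling the rigidity (which pins the honest provers down) with the simulator's freedom (which needs the opposite) is the technical heart. Secondary obstacles are ensuring that a malicious verifier cannot misuse the coin-flipping phase or the GHZ test itself to relay information between provers or to correlate their answers in $\Pi$ (so that the added phases leak nothing beyond what $S$ already simulates), and verifying that soundness of $\Pi'$ genuinely reduces to soundness of $\Pi$ averaged over uniform $r$ once the binding property is in hand.
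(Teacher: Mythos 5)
There is a genuine gap, and it sits exactly where your plan begins. Your compiler presupposes a normal form in which ``$V$ is honest with coins $r$'' is a predicate the provers can check round by round from the transcript. For $\mathsf{QMIP^*}$ the verifier's messages are quantum states, in general entangled with the verifier's private register, so the provers cannot test whether a received message ``equals the honest-verifier message determined by $r$ and the history so far'' without a reduction that first makes the verifier's behaviour classical and public. The only such reduction available is the public-coin transformation of Kempe et al., which the paper uses; but after that transformation the verifier's entire randomness in the (three-turn, then two-turn) protocol is a \emph{single classical bit}, at which point your whole coin-flipping-with-commitment phase is solving a problem that no longer exists. The second gap is the GHZ-based multi-prover commitment itself: you need it simultaneously binding against arbitrarily entangled provers (via an unproven quantitative rigidity statement that survives interleaving with $\Pi$), hiding, and equivocable by the simulator, and you correctly flag the tension between binding and equivocability as ``the technical heart'' without resolving it. Since Theorem~1 must hold with no computational assumptions, you also cannot retreat to a cryptographic commitment. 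Finally, your reliance on Watrous rewinding is not needed for this theorem (the paper uses rewinding only in the appendix variant that avoids adding a prover).

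The paper's route is much lighter. It reduces to a two-turn protocol in which the verifier sends one public bit $b$ to provers $1,\dots,p$ and nothing to a new prover $0$ who holds the verifier's private register $\mathsf{V}$; the only residual malicious-verifier attack is to send \emph{different} bits to different provers. This is neutralized not by a commitment but by having the provers share a $(p+1)$-partite GHZ state whose measurement outcome \emph{is} the coin: the verifier either projects onto the GHZ state (GHZ test) or reads $b$ from prover $0$'s share and runs the forward/backward test (history test). Soundness follows from the gentle-measurement lemma (Lemma~1) applied to the GHZ projection --- no rigidity or self-testing is invoked --- and zero-knowledge becomes essentially immediate because the verifier only ever receives copies of one uniformly random bit together with the state the honest verifier would have received. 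If you want to salvage your plan, the first step is to realize that the KKMV reduction already collapses the verifier's randomness to one public bit, and the remaining work is only to synchronize that bit across provers, which is what the shared GHZ state does directly.
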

%For one prover case, the equivalence of general zero-knowledge and honest zero-knowledge was proved by Watrous for statistical zero-knowledge \cite{W2} by using a quantum statistical zero-knowledge hard problem, and by Kobayashi for perfect/statistical/computational zero-knowledge by the direct construction of simulators \cite{Ko}. 
We analyze the public coin protocols of honest verifier quantum multi-prover zero-knowledge systems, which is the analogue of the public coin $\rm{QMIP^*}\ systems$ \cite{KKMV}, since we can convert any honest-verifier $\rm QMZK^*$ protocol into such a restricted protocol. Comparing the systems in \cite{KKMV}, the new difficulty for the multi-prover zero-knowledge case is that the malicious verifier may change messages for each prover even if the honest verifier only sends the same one bit to all provers. The main ingredient in our proof is what we call the GHZ test: it replaces the coin in the public coin protocol of $\sf{QMIP^*}$ \cite{KKMV} by a GHZ state shared by the provers. 
We analyze mainly computational quantum zero-knowledge systems, but this result also holds for statistical quantum zero-knowledge systems. In the main text, we add one prover to eliminate the honest condition, but this addition is not necessary. We prove that the honest condition can be eliminated without increasing the number of the provers combining our GHZ test with the rewinding of \cite{Ko,W2} in Appendix. 

The second result is the construction of computational quantum zero-knowledge systems for $\sf QMIP^*$ with natural computational assumptions.  %In this result, we heavily use  the technique of quantum zero-knowledge proof for $\sf{QMA}$ by Broadbent et al \cite{BJSW}. 
\begin{theo}
Let $p$ be any positive integer. If there exists an unconditionally binding and ]computational hiding bit commitment scheme, every language computed by quantum interactive proof systems with $p$ provers has a computational quantum zero-knowledge interactive proof system with $p+1$ provers.
\end{theo}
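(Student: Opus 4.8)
The plan is to build the protocol from three ingredients: the LHI protocol, which exposes the acceptance of a quantum multi-prover interaction as a local-Hamiltonian check on its history state; the commitment-based zero-knowledge protocol of \cite{BJSW} for the local Hamiltonian problem; and our first result, which removes the honest-verifier restriction at the price of one prover. Fix a $p$-prover $\mathsf{QMIP^*}$ protocol $\Pi$ for $L$; possibly after a transformation (cf.\ the public-coin $\mathsf{QMIP^*}$ systems of \cite{KKMV} and the analysis behind our first result) we may take $\Pi$ to be public coin with polynomially many rounds and an inverse-polynomial completeness--soundness gap. First I would run the LHI protocol on $\Pi$: this produces, for each input $x$, a local Hamiltonian $H_x=\sum_r \Pi_r$ whose terms are in the special local form demanded by \cite{BJSW} (each a Clifford conjugate of a computational-basis projector), living on the registers that carry the verifier's coins, the provers' messages, and the workspace and clock of the verifier's decision circuit, such that if $x\in L$ the honest strategy of $\Pi$ yields a \emph{history state} of negligible energy, while if $x\notin L$ every state arising from a no-communicating strategy has energy at least $1/\mathrm{poly}$. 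The subtlety — the zero-knowledge counterpart of the difficulty isolated for $\mathsf{QMIP^*}$ in \cite{KKMV} — is that a bare ground state of $H_x$ need not come from any no-communicating strategy, so this history state must not be supplied by one prover: in the new protocol the $p$ provers instead run $\Pi$ with the verifier broadcasting its coins in the clear, each prover returning its messages encrypted under a quantum one-time pad whose classical key it committed to beforehand, and the clock and global coherence of the distributed history state enforced through a shared GHZ state together with the GHZ test of our first result. Because the provers genuinely cannot communicate while this interaction phase runs, the committed encrypted transcript \emph{is} the history state of a real no-communicating strategy, so soundness of $\Pi$ is inherited.

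Second, once the interaction phase has ended the verifier holds the coins and the encrypted messages, and ``$\Pi$ accepts'' has become the local Hamiltonian $H_x$ on these registers, so I would run the \cite{BJSW} test on top: the verifier picks a uniformly random term $\Pi_r$ (with the coin-flipping subprotocol of \cite{BJSW} fixing the auxiliary Clifford), the provers whose message qubits fall in the $O(1)$-size support of $\Pi_r$ decommit the corresponding one-time-pad key bits, and the verifier measures the prescribed Clifford, applies the key-derived Pauli corrections, and rejects if the term looks violated; this is repeated in parallel and the verifier accepts if the estimated energy is small. Completeness is immediate from the first step; soundness follows from the unconditional binding of the commitment scheme (a cheating coalition is pinned to a single encrypted transcript, hence to a single no-communicating strategy) together with the $1/\mathrm{poly}$ energy gap, after standard amplification; and honest-verifier \emph{computational} zero-knowledge follows, as in \cite{BJSW}, from the computational hiding of the commitment scheme — which is where the security becomes merely computational — since the honest verifier's view is just the coins it chose, one-time-padded (hence maximally mixed) message registers, and the opening of a single uniformly random local term, all of which the \cite{BJSW} simulator reproduces without the witness.

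Third, the protocol built in the first two steps uses $p$ provers and is honest-verifier computational zero-knowledge, so feeding it into our first result — whose GHZ-test transformation preserves computational zero-knowledge and adds exactly one prover — yields a general, malicious-verifier, computational zero-knowledge $\mathsf{QMIP^*}$ protocol for $L$ with $p+1$ provers, which is the claim. (This is also why a direct multi-prover transcription of \cite{BJSW} is only honest-verifier zero-knowledge: a cheating verifier can route different coins to different provers, exactly the phenomenon the GHZ test is designed to neutralize.)

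The step I expect to be the main obstacle is the first one together with the soundness half of the second. Unlike the static $\mathsf{QMA}$ witness of \cite{BJSW}, the object to be certified is the history state of a multi-prover interaction, and the local Hamiltonian does not by itself encode no-communication; the reduction must therefore be arranged so that the provers are forced to generate this history state live and distributedly — its per-prover message registers held by the respective provers, its clock glued together by the GHZ state — while staying compatible with the one-time-pad-and-commitment layer of \cite{BJSW}, and so that neither the energy gap nor the simulability is spoiled. Making the clock (circuit-to-Hamiltonian) construction, the per-message encryption, and the no-communication argument coexist is the delicate part; a secondary concern is verifying that the \cite{BJSW} simulator, with its coin-flipping and (for malicious verifiers) rewinding, still goes through with several commitments, public-coin rounds and genuinely quantum transcript messages, and composes cleanly with the GHZ-test transformation of our first result.
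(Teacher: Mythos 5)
Your high-level recipe (circuit-to-Hamiltonian on the interaction, then the commitment-based encoding and checking of \cite{BJSW}, with a GHZ test to stop the verifier from routing different queries to different provers) matches the paper's, but the step you yourself flag as the main obstacle is exactly where the proposal has a genuine gap, and the paper resolves it differently from anything you describe. In the paper's LHI protocol the extra prover is \emph{not} added at the end by invoking the first result; it is introduced at the very first step: the original protocol is first rewritten so that all communication passes through single-qubit registers $\mathsf{Me}_k$ one qubit at a time, and then prover $0$ --- who receives \emph{no} message from the verifier --- holds the verifier's private register $\mathsf{V}$, \emph{all} message registers $\mathsf{M}_1,\dots,\mathsf{M}_p$, and the entire clock $\mathsf{C}_1,\dots,\mathsf{C}_T$, while each queried prover $i$ controls only the one-qubit register $\mathsf{Me}_i$. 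This is what makes the soundness proof (Lemma 2) go through: the bulk of the history state is fixed independently of the measured term because its holder gets no query, and whatever the queried provers do adaptively on their single qubits is absorbed into legitimate prover unitaries $P_k=A_k'B_kA_k$ for the original protocol. Your proposal instead keeps the history state ``distributed'' among the $p$ original provers (``its per-prover message registers held by the respective provers, its clock glued together by the GHZ state''), but then the provers who learn the term index can adapt the clock and verifier-register portions of the state to the term being measured, and no single ground state need exist --- precisely the failure mode you name without repairing. Relatedly, your description oscillates between the provers ``running $\Pi$ live'' and supplying a history state: a live execution yields a transcript/final state, not the coherent superposition $\sum_t|t\rangle|\psi_t\rangle$ that the local Hamiltonian terms check, and the GHZ registers in the paper are disjoint from the clock (they certify only that the term index is uniform and common to all provers; clock consistency is enforced by separate Hamiltonian terms acting on prover $0$'s registers).

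Your composition order is also different from the paper's in a way that would need justification: you build a $p$-prover honest-verifier ZK protocol and then apply Theorem 4 to gain general ZK at the cost of one prover, whereas the paper builds the $(p+1)$-prover protocol directly (prover $0$ appears already in the LHI protocol), folds the GHZ test into the LHI+ protocol to derandomize the term selection, and argues malicious-verifier zero-knowledge for the final protocol directly via the reduction to a common query. Theorem 4 as stated applies to three-turn protocols, and the \cite{BJSW} layer adds coin-flipping and an NP zero-knowledge subprotocol with many rounds, so invoking the first result as a black box at the end is not immediate. To repair the proposal you would need to (i) introduce the non-queried prover holding $\mathsf{V}$, the message registers and the clock at the LHI stage and prove the analogue of Lemma 2, and (ii) either integrate the GHZ test at that stage as the paper does or justify the final black-box composition.
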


The main new tool is the Local Hamiltonian based Interactive Protocol (called the LHI protocol in this paper). Local checking of the history of the computation is a very important tool in complexity theory. The most famous example in quantum complexity theory is the Local Hamiltonian problem for $\sf QMA$ \cite{KSV}, which is an analogue of SAT for NP. The LHI protocol is a variation of local checking of the history of the computation for $\rm QMIP^*$ protocols. Recently local checking of the history states of $\rm QMIP^*$ protocols was studied by Ji \cite{Ji1} and Fitzsimons et al.\cite{FJSY}. Their purpose is to construct efficient $\rm MIP^*$ protocols. On the other hand, the verifier in our protocol directly handles quantum history states.

In the LHI protocol,  what the verifier does consists of only sending one message to provers and one measurement on the state received from provers by one local Hamiltonian. Then we apply the technique of quantum zero-knowledge proof for $\sf{QMA}$ by Broadbent, Ji, Song and Watrous \cite{BJSW} to the LHI protocols. In their protocol for the Local Hamiltonian problem, the honest prover has to encode the witness of the Local Hamiltonian problem, to send the commitment of the encoding key, and to convince the verifier that the output of the verifier on the encoded witness and the commitment correspond to yes instances by the zero-knowledge protocol for NP. The process of encoding, committing, and convincing the verifier of the correctness of the witness for the Local Hamiltonian problem can be also applied to encode and check the history state in the multi-prover interactive proof.
The protocol also needs the replacement of coins by GHZ states which is used to prove our first result. 

Chiesa et al. \cite{CFGS} constructed the perfect zero-knowledge $\mathrm{MIP^*}$ protocol for $\mathsf{NEXP}$. The upper bound of $\mathsf{MIP^*(=QMIP^*}$\cite{RUV}), however, is not obvious and $\mathsf{QMIP^*}$ may not be in $\mathsf{NEXP}$. Hence their result does not imply the existence of zero-knowledge protocols for $\mathsf{QMIP^*}$. On the other hand, our protocol assumes the existence of some computational tools and our protocol is only the computational quantum zero-knowledge system. Chiesa et al.'s protocol, however, is perfect zero-knowledge. Hence our result is incomparable to the  result of \cite{CFGS}.

The organization of this paper is as follows. Section 2 is the preliminary section, especially the definition of quantum multi-prover zero-knowledge is here. In Section 3, we prove the elimination of  the honest condition. In Section 4, we construct the zero-knowledge protocol for $\sf QMIP^*$. In Section 5, we discuss some open problems.   In Appendix, we prove the additional result that any honest $\rm QMZK^*$ protocol can be converted into a general $\rm QMZK^*$ protocol without increasing of the number of provers.

\section{Preliminaries}
We assume that the reader is familiar with basic quantum computation \cite{KSV,NS1} and quantum interactive proof systems \cite{VW}. Our main results heavily use results from \cite{BJSW,KKMV}.
\subsection{Notations}
For a string $x$, $|x|$ denotes the length of $x$. For integers $n,l$,  define $[n]=\{1,2,...,n\}$, $[n,l]=\{n,n+1,...,l\}$.  In this paper, $p$ denotes the number of provers, and  $m$ denotes the number of turns in interactive proof systems. Both are the polynomial size of $|x|$. $poly$ denotes some polynomial in $|x|$. $exp$ denotes some exponential of a polynomial in $|x|$. Usually $c$ is a completeness parameter and  $s$ is a soundness parameter in interactive proof systems. A function $\epsilon(|x|)$ is negligible if for any polynomial $g(|x|)$, $\epsilon(|x|)<1/g(|x|)$ holds for any sufficiently large $|x|$.  For a vector $|\psi\rangle$, $\||\psi\rangle\|$ denotes the norm of $|\psi\rangle$. For density matrices $\rho,\sigma$, $\|\rho-\sigma\|$ denotes the trace distance between $\rho$ and $\sigma$. $|+\rangle,|-\rangle$ denote $\frac{1}{\sqrt 2}(|0\rangle+|1\rangle),\frac{1}{\sqrt 2}(|0\rangle-|1\rangle)$. The measurement by a binary-outcome POVM $\{M,\mathrm{Id}-M\}$ is often called as the measurement by $M$. In particular, if $M=|\psi\rangle\langle\psi|$ for a state $|\psi\rangle$, the measurement is called as the projection onto $|\psi\rangle$.
\subsection{Universal gate and Clifford Hamiltonians}
The circuit of the verifier consists of tensor products of two Hadamard gates and controlled phase gates, $\{H\otimes H,\Lambda(P)\}$. The set $\{H,\Lambda(P)\}$ is universal,  and hence $\{H\otimes H,\Lambda(P)\}$ is also universal. This setting is necessary to apply the result of \cite{BJSW}. They used restricted forms of Hamiltonians, called Clifford Hamiltonians. Clifford Hamiltonians are Hamiltonians realized by Clifford operations, followed by a standard basis measurement.%projections onto the computational basis after application of Clifford gates, $\{CNOT, H, P, X, Z\}$, where $X,Z$ are Pauli X, Z operator.
\subsection{$\mathrm{QMIP^*}$ systems}\label{subsectionQMIP*}
A quantum $p$-prover interactive proof system consists of the following data; a verifier with a private register $\mathsf{V}$, $p$ provers with private registers $\mathsf{P}_1,\mathsf{P}_2,...,\mathsf{P}_p$, and message registers $\mathsf{M}_1,\mathsf{M}_2,...,\mathsf{M}_p$. All message registers have $l$ qubits. One of qubits of $\mathsf V$ is the output qubit. At the beginning of the protocol, $(\mathsf{V},\mathsf{M}_1,\mathsf{M}_2,...,\mathsf{M}_p)$ are initialized to $|0\cdots0\rangle$. Provers can set any state $|\psi\rangle$ in $\mathsf{P}_1,\mathsf{P}_2,...,\mathsf{P}_p$.  During the protocol, the verifier and the provers apply their operations onto $(\mathsf{V},\mathsf{M}_1,\mathsf{M}_2,...,\mathsf{M}_p$,$\mathsf{P}_1,\mathsf{P}_2,...,\mathsf{P}_p)$ alternatively. The verifier applies the operations onto $(\mathsf{V},\mathsf{M}_1,\mathsf{M}_2,...,\mathsf{M}_p)$. Prover $p'$ applies the operations onto $(\mathsf{M}_{p'},\mathsf{P}_{p'})$. The verifier's operations can be constructed in polynomial time. Prover's operations have no such restrictions. After applying the last operation of the verifier, the verifier measures the output qubit in the computational basis. If the output is  $|1\rangle$, then the verifier accepts. Denote this probability as $p_{acc}$.
The $m$-turn verifier for the quantum $p$-prover interactive proof system applies circuits $\{V^1,...,V^{\lceil (m+1)/2 \rceil}\}$ which can be constructed in polynomial time. The verifier uses $V^j$ at the $j$-th verifier's turn. The $m$-turn provers apply operations $\{P_i^1,...,P_i^{\lfloor (m+1)/2 \rfloor}\}_{i=1,...,p}$, where $P_i^j$ is the unitary operator done by the $i$-th prover in the provers' $j$-th  turn. Provers' strategy is the tuple of ($|\psi\rangle,\{P_i^1,...,P_i^{\lfloor (m+1)/2 \rfloor}\}_{i=1,...,p}$). $(V,\{P_1,...,P_p\},|\psi\rangle)$ denotes the corresponding interactive proof system, that is, the tuple of the set of circuits of the verifier and the provers' strategy.
\begin{defi}A language $L$ is in $\mathsf{QMIP^*}(p,m,c,s)$ if  there is an $m$-turn verifier V, for any $x$,
\begin{enumerate}
\item If $x\in L$, then there are $m$-turn provers $P_1,...,P_p$ and a state $|\psi\rangle$  (called honest provers), $p_{acc}\ge c$.
\item If $x\notin L$, then, for any $m$-turn provers $P_1',...,P_p'$ and any state $|\psi'\rangle$, $p_{acc}\le s$.
\end{enumerate}
\end{defi}

\subsection{Quantum zero-knowledge}
First, we introduce the indistinguishability of states and channels. We note that the definitions of indistinguishability allow the auxiliary quantum state $\sigma$. 
\begin{defi}[Indistinguishable states]
Let $S\subseteq\{0,1\}^*$ be a language, and let $r$ be a function bounded by a polynomial. For each $x\in S$, $\rho_x,\zeta_x$ are $r(|x|)$ qubits states. The ensembles $\{\rho_x|x\in S\}$ and $\{\zeta_x|x\in S\}$ are computationally indistinguishable if for any function $s,k$ bounded by a polynomial, for any circuit $Q$, the size of which are $s(|x|)$ and the output of which is $0$ or $1$, and for any $k(|x|)$ qubits state $\sigma$, there is a negligible function $\epsilon$, such that
\begin{equation*}
\|\mathrm{Pr}[Q(\rho_x\otimes\sigma)=1]-\mathrm{Pr}[Q(\zeta_x\otimes\sigma)=1]\|\le\epsilon(|x|).
\end{equation*}
\end{defi}
\begin{defi}[Indistinguishable channels]
Let $S\subseteq\{0,1\}^*$ be a language, and let $q,r$ be functions bounded by a polynomial. For each $x\in S$, $\Psi_x,\Phi_x$ are channels from $q(|x|)$ qubits to $r(|x|)$ qubits for each $x\in S$. The ensembles $\{\Psi_x|x\in S\}$ and $\{\Phi_x|x\in S\}$ are indistinguishable if for any functions $s,k$ bounded by a polynomial, for any circuit $Q$ the size of which is $s(|x|)$, acts on $r(|x|)+k(|x|)$ qubits and outputs $0$ or $1$, and for any $r(|x|)+k(|x|)$ qubits state $\sigma$, there is a negligible function $\epsilon$, such that
\begin{equation*}
\|\mathrm{Pr}[Q(\Psi_x\otimes \mathrm{Id}) (\sigma)=1]-\mathrm{Pr}[Q(\Phi_x\otimes \mathrm{Id})(\sigma)=1]\|\le\epsilon(|x|).
\end{equation*}
\end{defi}
Next, we introduce quantum zero-knowledge proof systems. Let $(V,\{P_1,...,P_p\},|\psi\rangle)$ be a quantum multi-prover interactive proof system. Let $view_{(V,\{P_1,...,P_p\},|\psi\rangle)}(x,j)$ be the state in register $(\mathsf{V},\mathsf{M}_1,...,\mathsf{M}_p)$ at the $j$-th turn of the interactive proof. A malicious verifier $V'$ is a circuit family constructed in polynomial time, which interacts with $\{P_1,...,P_p\}$ instead of $V$. $V'$ has an additional input register $\mathsf W$ and an output register $\mathsf Z$. The interaction of $(V',\{P_1,...,P_p\},|\psi\rangle)$ until the $j$-th turn can be interpreted as a channel $\langle V',\{P_1,...,P_p\},|\psi\rangle \rangle(x,j)$ from $\mathsf W$ to $\mathsf Z$ decided by $(V',x,j)$.
\begin{defi}[Quantum Multi-Prover Zero-Knowledge Systems]
\sloppy A quantum interactive proof $(V,\{P_1,...,P_p\},|\psi\rangle)$ for a language $L$ is computationally zero-knowledge if for any verifier $V'$, there is an ensemble of channels $\{S_{V'}(x,j)\}$ (called a simulator) constructed in polynomial time, and $\{\langle V',\{P_1,...,P_p\},|\psi\rangle \rangle(x,j)|x\in L\}$ and $\{S_{V'}(x,j)|x\in L\}$ are indistinguishable.

Let $\sf {QMZK^*}$$(p,m,c,s)$ be the class decided by quantum multi-prover zero-knowledge systems with $p$ provers, $m$ turns, completeness $c$ and soundness $s$. We define $\mathsf{QMZK^*}:=\cup_{p,m:poly,c-s\ge\frac{1}{poly}}\mathsf{QMZK^*}(p,m,c,s)$.
\end{defi}

\begin{defi}[Honest Verifier Quantum Multi-Prover Zero-Knowledge Systems]
A quantum $p$-prover interactive proof $(V,\{P_1,...,P_p\},|\psi\rangle)$ for a language $L$ is honest verifier zero-knowledge if for the honest verifier $V$, there is an ensemble of states $\{\Psi_{V}(x,j)\}$ constructed in polynomial time and  $\{view_{(V,\{P_1,...,P_p\},|\psi\rangle)}(x,j)|x\in L\}$ and $\{\Psi_{V}(x,j)|x\in L\}$ are indistinguishable.

\sloppy Let $\sf {HVQMZK^*}$$(p,m,c,s)$ be the class decided by honest verifier quantum multi-prover zero-knowledge systems with $p$ provers, $m$ turns, completeness $c$ and soundness $s$. We define $\mathsf{HVQMZK^*}:=\cup_{p,m:poly,c-s\ge\frac{1}{poly}}\mathsf{HVQMZK^*}(p,m,c,s)$.
\end{defi}

We note a lemma used to eliminate the honest condition. 
\begin{lemm}[\cite{Aa}]
\label{2PO}
Assume a POVM on a state $\rho$ with two outcomes has the outcome 1 with probability $\epsilon$. Denote the state after it outputs 0 by $\rho_0$. Then $\|\rho-\rho_0\|\le\sqrt\epsilon$.
\end{lemm}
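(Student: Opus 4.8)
The plan is to prove this gentle-measurement estimate first for pure states, where it reduces to a one-line computation with the square-root operator, and then to lift it to an arbitrary $\rho$ by purification together with contractivity of the trace distance under partial trace. Throughout, write $E := \mathrm{Id} - M$ for the POVM element corresponding to outcome $0$, so that $\mathrm{Tr}(E\rho) = 1-\epsilon$ and, under the standard (square-root / L\"uders) measurement rule, the normalized post-measurement state is $\rho_0 = \frac{\sqrt E\,\rho\,\sqrt E}{1-\epsilon}$.

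First I would treat the case $\rho = |\phi\rangle\langle\phi|$. Set $|\psi_0\rangle := \sqrt E\,|\phi\rangle$, so that $\||\psi_0\rangle\|^2 = \langle\phi|E|\phi\rangle = 1-\epsilon$, and the post-measurement state is $|\hat\psi_0\rangle := |\psi_0\rangle / \sqrt{1-\epsilon}$. The key inequality is the operator bound $\sqrt E \ge E$, valid because $0 \le E \le \mathrm{Id}$ and $\sqrt t \ge t$ on $[0,1]$; applying it to $|\phi\rangle$ gives $\langle\phi|\sqrt E|\phi\rangle \ge \langle\phi|E|\phi\rangle = 1-\epsilon$. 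Hence
\[
\langle\phi|\hat\psi_0\rangle = \frac{\langle\phi|\sqrt E|\phi\rangle}{\sqrt{1-\epsilon}} \ge \frac{1-\epsilon}{\sqrt{1-\epsilon}} = \sqrt{1-\epsilon}.
\]
Since the trace distance between two pure states satisfies $\|\,|\phi\rangle\langle\phi| - |\hat\psi_0\rangle\langle\hat\psi_0|\,\| = \sqrt{1 - |\langle\phi|\hat\psi_0\rangle|^2}$ (this convention, in which the overlap enters without an extra factor of $2$, is exactly what makes the target bound $\sqrt\epsilon$ rather than $2\sqrt\epsilon$), I obtain $\|\rho - \rho_0\| \le \sqrt{1-(1-\epsilon)} = \sqrt\epsilon$, which is the claimed bound for pure inputs.

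To extend to mixed $\rho$, I would introduce a purification $|\Phi\rangle$ on $\mathcal H \otimes \mathcal K$ and apply the measurement element $E \otimes \mathrm{Id}$, which acts trivially on the ancilla; then $\langle\Phi|(E\otimes\mathrm{Id})|\Phi\rangle = \mathrm{Tr}(E\rho) = 1-\epsilon$, so the pure-state case already yields $\|\,|\Phi\rangle\langle\Phi| - |\hat\Phi_0\rangle\langle\hat\Phi_0|\,\| \le \sqrt\epsilon$ for the normalized post-measurement purification $|\hat\Phi_0\rangle$. Because $E$ acts only on $\mathcal H$, tracing out $\mathcal K$ sends $|\Phi\rangle\langle\Phi| \mapsto \rho$ and $|\hat\Phi_0\rangle\langle\hat\Phi_0| \mapsto \sqrt E\rho\sqrt E/(1-\epsilon) = \rho_0$; since the partial trace is a CPTP map and the trace distance is contractive under CPTP maps, $\|\rho - \rho_0\| \le \|\,|\Phi\rangle\langle\Phi| - |\hat\Phi_0\rangle\langle\hat\Phi_0|\,\| \le \sqrt\epsilon$, completing the argument.

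The computation itself is short; the one point that deserves care, and the closest thing to an obstacle, is making the mixed-state reduction rigorous, namely verifying that the partial trace of the purified post-measurement state really returns the L\"uders state $\rho_0$ (this is where it matters that $E \otimes \mathrm{Id}$ acts only on the system register) and that contractivity is invoked for the correct CPTP map. I would also flag the modeling convention that $\rho_0$ denotes the square-root post-measurement state: for a general Kraus implementation $K = U\sqrt E$ the output is $U\rho_0 U^\dagger$, and the bound can fail because of the spurious unitary $U$, so the lemma is understood for the canonical measurement, which is the relevant one in the rewinding application.
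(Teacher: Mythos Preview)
The paper does not give its own proof of this lemma: it is stated with a citation to Aaronson~\cite{Aa} and used as a black box, so there is no in-paper argument to compare against. Your proof is correct and is essentially the standard gentle-measurement (``almost as good as new'') argument: the operator inequality $\sqrt E \ge E$ gives the fidelity lower bound for pure states, and purification plus contractivity of trace distance under partial trace handles the mixed case. Your remarks about the trace-distance normalization (the paper's $\|\rho-\sigma\|$ is the usual trace distance, so the pure-state formula is $\sqrt{1-|\langle\phi|\psi\rangle|^2}$ without a factor of~$2$) and about the L\"uders/square-root convention for the post-measurement state are exactly the points one needs to check, and both are consistent with how the lemma is invoked later in the paper (namely with the projective POVM $\{\Pi_{GHZ},\mathrm{Id}-\Pi_{GHZ}\}$, where the square-root rule is automatic).
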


\subsection{Cryptographic tools}
Here we introduce a few cryptographic tools used in the $\mathrm{QMZK^*}$ protocol to apply the technique in \cite{BJSW}. Assuming the existence of commitment schemes,   computational quantum zero-knowledge proofs for NP \cite{W2} and quantumly secure coin-flipping \cite{DL} exist.
\begin{defi}[commitment schemes]
A quantum computationally commitment scheme for an alphabet $\Gamma$ is an ensemble of functions $\{f_n: \Gamma\times \{0,1\}^{p(n)}\rightarrow\{0,1\}^{q(n)},\ n\in \mathbb{N}\}$ computable in polynomial time such that following holds;
\begin{itemize}
\item Unconditionally binding property. For all $n\in\mathbb{N},a,b\in\Gamma$ and $r,s\in\{0,1\}^{p(n)}$, if $f_n(a,r)=f_n(b,s)$, then $a=b$.
\item Quantum computationally concealing property. For all $a\in\Gamma$ and $n\in\mathbb{N}$, define
\begin{equation}
\rho_{a,n}=\frac{1}{2^{p(n)}}\sum_{r\in\{0,1\}^{p(n)}}|f_n(a,r)\rangle\langle f_n(a,r)|.
\end{equation}
For all  $a,b\in\Gamma$, the ensemble $\{\rho_{a,n}|n\in \mathbb{N}\}$ and $\{\rho_{b,n}|n\in \mathbb{N} \}$ are computationally indistinguishable.
\end{itemize}
\end{defi}
\subsubsection{Coin-Flipping}
A coin-flipping protocol is an interactive process that allows two parties to jointly toss random
coins. We only make use of one specific coin-flipping protocol, namely Blum's coin-flipping protocol \cite{Bl} in which an honest prover commits to a random $y\in\{0,1\}$, the honest verifier selects $z\in\{0,1\}$ at random, the prover reveals $y$, and the two participants agree that the random bit generated is $r=y\oplus z$. Damg\r{a}rd and Lunemann \cite{DL} proved that Blum's coin-flipping protocol is quantum-secure, assuming a quantum-secure commitment scheme.  

\section{Elimination of the honest condition}
In this section, we eliminate the honest condition of $\rm QMZK^*$ protocols. This needs no computational assumptions. First, we reduce the number of turns. The reduction is almost the same as Kempe et al.  \cite{KKMV}. There are no new contents except for the addition of honest zero-knowledge, while it is shown obviously. Hence we only give the statement. 
\begin{theo} Let $\sqrt s\le c$. Then, $\sf {HVQMZK^*}$$(p,4m+1,c,s)\subseteq$ $\sf{HVQMZK^*}$$(p,2m+1,\frac{1+c}{2},\frac{1+\sqrt s}{2})$.
\end{theo}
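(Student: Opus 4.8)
The plan is to follow the round‑reduction of \cite{KKMV} essentially verbatim and then observe that it preserves honest‑verifier zero‑knowledge. The construction is the standard ``midpoint snapshot plus coin flip''. First one normalises the given $(4m+1)$‑turn protocol in the usual way: defer all of the verifier's measurements to the last turn, and modify the verifier's circuit just before the middle so that its private register $\mathsf V$ is emptied into the message registers at the cut; after this every operation up to the cut is a public unitary, and the global state $|\chi\rangle$ of the protocol at the cut lives entirely on $(\mathsf M_1,\dots,\mathsf M_p,\mathsf P_1,\dots,\mathsf P_p)$. The new protocol then runs as follows: each prover reconstructs $|\chi\rangle$ internally (it can, since everything up to the cut is public), the verifier flips a public coin $b\in\{0,1\}$ and sends it to every prover; if $b=0$ the provers load the message part of $|\chi\rangle$ into $(\mathsf M_1,\dots,\mathsf M_p)$ and then verifier and provers run the first half of the original protocol \emph{in reverse} (each applying the inverses of its circuits in reverse order), after which the verifier checks that $(\mathsf V,\mathsf M_1,\dots,\mathsf M_p)$ has returned to $|0\cdots0\rangle$; if $b=1$ they instead run the second half \emph{forward} from the loaded $|\chi\rangle$ and the verifier performs the original acceptance measurement. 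Accounting for which operation absorbs the coin‑flip message and for the parity of $4m+1$, the total number of turns is $2m+1$; this bookkeeping is exactly that of \cite{KKMV}.

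Completeness is immediate: honest provers supply the true $|\chi\rangle$, the $b=0$ branch then returns the registers to $|0\cdots0\rangle$ with certainty (it undoes a unitary process), and the $b=1$ branch is literally the tail of the original honest run and hence accepts with probability $c$; averaging over $b$ gives $(1+c)/2$.

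For soundness, fix a cheating strategy, let $\rho$ be the reduced state the provers place in the message registers, and let $q_0,q_1$ be the acceptance probabilities of the two branches, so the overall acceptance probability is $(q_0+q_1)/2$. The $b=0$ branch is a two‑outcome measurement on $\rho$ (the provers' reversed operations, the verifier's fixed inverse circuits, and the final projection onto $|0\cdots0\rangle$), and when it accepts the post‑measurement state is, by construction, a \emph{legal} midpoint state -- one obtainable from $|0\cdots0\rangle$ by a genuine run of the first half. By Lemma~\ref{2PO} this post‑measurement state is within trace distance $\sqrt{1-q_0}$ of $\rho$. Running the $b=1$ branch on a legal midpoint state is a genuine run of the whole $(4m+1)$‑turn protocol, which accepts with probability at most $s$ because $x\notin L$; pushing this bound across the trace‑distance gap gives $q_1\le s+\sqrt{1-q_0}$, and combining this with $q_0\le 1$ and performing the optimisation of \cite{KKMV} yields overall acceptance at most $(1+\sqrt s)/2$. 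The hypothesis $\sqrt s\le c$ is what makes this new soundness parameter no larger than the new completeness parameter $(1+c)/2$, which is all that is needed before the subsequent amplification step.

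Finally, the new protocol inherits honest‑verifier zero‑knowledge. The honest verifier chooses $b$ itself, so it sits classically in $\mathsf V$ and in the messages; conditioned on $b=1$ the honest‑verifier view of the new protocol at turn $j$ equals, up to an $O(1)$ index shift, the honest‑verifier view of the original protocol at turn $2m+j$, and conditioned on $b=0$ it equals the original view at turn $2m-j$, because the reduced state of a fixed subsystem under a unitary process run in reverse agrees with that of the forward process at the mirror‑image time. Both families are simulable in polynomial time by hypothesis, so composing the two simulators with a coin flip gives a polynomial‑time simulator for the new protocol. The whole argument is thus that of \cite{KKMV}; the only delicate points are the exact turn count $2m+1$ and the precise soundness constant $(1+\sqrt s)/2$, both handled there, while the zero‑knowledge part is essentially free -- which is why the theorem is stated without a separate proof.
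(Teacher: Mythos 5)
Your overall route is the intended one: the paper gives no proof of this theorem and simply defers to the round-halving construction of \cite{KKMV}, noting that honest-verifier zero-knowledge is preserved "obviously". Your description of the construction (midpoint snapshot, public coin, reverse the first half or run the second half forward), your completeness calculation, and your zero-knowledge argument (the honest-verifier views of the new protocol are, branch by branch, views of the original protocol at shifted turn indices, hence simulable) all match what the paper is implicitly relying on.

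The genuine gap is in your soundness step. You bound the branch-$1$ acceptance probability by $q_1\le s+\sqrt{1-q_0}$ via Lemma~\ref{2PO} and then claim that optimising gives $\frac{1+\sqrt s}{2}$. It does not: maximising $\frac{1}{2}\bigl(q_0+s+\sqrt{1-q_0}\bigr)$ over $q_0\in[0,1]$ gives $\frac{5}{8}+\frac{s}{2}$ (attained at $q_0=3/4$), and one checks that $\frac{5}{8}+\frac{s}{2}\ge\frac{1+\sqrt s}{2}$ for every $s\in[0,1]$, with equality only at $s=1/4$. So the gentle-measurement argument proves a strictly weaker soundness constant than the one in the statement. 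The constant $\frac{1+\sqrt s}{2}$ in \cite{KKMV} comes from a different, sharper argument: the two branches correspond (after absorbing the provers' unitaries) to two projections $\Pi_0,\Pi_1$, the acceptance probability is at most $\frac{1}{2}\lambda_{\max}(\Pi_0+\Pi_1)=\frac{1}{2}\bigl(1+\|\Pi_0\Pi_1\|\bigr)$, and $\|\Pi_0\Pi_1\|^2$ is bounded by the acceptance probability of the original protocol, hence by $s$. Your weaker bound would in fact still suffice for the paper's downstream use (Proposition~1 only needs a constant completeness--soundness gap after amplification), but it does not establish the theorem as stated; to get $\frac{1+\sqrt s}{2}$ you need to replace the trace-distance argument by the projection-overlap argument.
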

The next proposition is easily derived by applying Theorem 3 to a protocol obtained from repeating the original protocols with threshold value decisions.
\begin{prop}
$\sf {HVQMZK^*}$$(p,m,c,s)\subseteq$ $\sf{HVQMZK^*}$$(p,3,1-\frac{1}{exp},1-\frac{1}{poly})$.
\end{prop}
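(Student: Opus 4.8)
The plan is to reduce the general-parameter statement to the three-turn normal form of Theorem 3 by a standard amplification-then-round-reduction argument, taking care that every manipulation preserves honest-verifier zero-knowledge. First I would take a protocol deciding $L$ in $\mathsf{HVQMZK^*}(p,m,c,s)$ with $c-s\ge\frac{1}{poly}$ and perform sequential repetition: run $N=poly$ independent copies of the protocol (each prover using a fresh private register and the honest provers' state being the tensor product of $N$ copies of $|\psi\rangle$), and have the verifier accept iff the number of accepting copies exceeds a threshold $t$ placed strictly between $sN$ and $cN$. A Chernoff/Hoeffding bound shows that in the $x\in L$ case the acceptance probability is $1-\frac{1}{exp}$ and in the $x\notin L$ case it is $\frac{1}{exp}$; in particular, after this step we are well inside the regime $\sqrt{s}\le c$ needed to invoke Theorem 3, and the number of turns is still $m$ (the copies run in parallel, not in series, so the turn count does not grow — only the message length does).

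Next I would iterate Theorem 3. Each application takes $\mathsf{HVQMZK^*}(p,4m'+1,c',s')$ into $\mathsf{HVQMZK^*}(p,2m'+1,\frac{1+c'}{2},\frac{1+\sqrt{s'}}{2})$, i.e. it roughly halves the number of turns while keeping the completeness above and soundness below fixed constants (since starting from $c'=1-\frac{1}{exp}$ and $s'=\frac{1}{exp}$, after $O(\log m)$ iterations the completeness is still some constant like $\frac{3}{4}$ and the soundness some smaller constant like $\frac{1}{4}$, so the hypothesis $\sqrt{s'}\le c'$ continues to hold throughout). After $O(\log m)$ rounds of this, the turn count is brought down to $3$, giving a protocol in $\mathsf{HVQMZK^*}(p,3,\frac{3}{4},\frac{1}{4})$ or thereabouts. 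Finally I would do one more short amplification \emph{within} the three-turn format: run $poly$ copies in parallel again (this keeps it at $3$ turns) with a threshold decision to push completeness back up to $1-\frac{1}{exp}$ and soundness down to $1-\frac{1}{poly}$ — indeed down to $\frac{1}{exp}$, which is stronger than the stated $1-\frac{1}{poly}$. Throughout, honest-verifier zero-knowledge is preserved: parallel repetition of an HVQMZK protocol is HVQMZK because the honest verifier's view is the tensor product of the individual views and a simulator is obtained by tensoring the individual simulators (the threshold decision is a private computation of the verifier and does not enlarge the view), and Theorem 3 is stated to preserve the honest zero-knowledge property.

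The only genuinely delicate point — and the step I expect to be the main obstacle — is verifying that \emph{parallel} repetition is legitimate for soundness in the entangled-prover setting: unlike the classical case, entangled provers can correlate their behaviour across copies, so the naive ``independent trials'' Chernoff argument does not immediately apply. The resolution is that we do not need a generic parallel repetition theorem; because the threshold-accept verifier is just measuring, on each copy's message registers, the same projective acceptance measurement $V$ used in the single-copy protocol, the acceptance count is a sum of $\{0,1\}$-valued measurement outcomes on a (possibly entangled) state, and one can bound the probability that this sum exceeds the threshold directly — e.g. by noting that the expected number of accepting copies is at most $sN$ for any (even entangled) strategy by linearity of expectation over the $N$ individual acceptance operators, and then applying Markov's inequality to get acceptance probability at most $sN/t < 1$, which after further repetition drives the soundness error down. (Alternatively one can invoke the known parallel-repetition results for $\mathsf{QMIP^*}$, but the Markov/linearity argument suffices here and avoids that machinery.) Completeness amplification is unproblematic since the honest provers genuinely can play independently across copies. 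I would close by assembling the parameters: combining the initial amplification, the $O(\log m)$ applications of Theorem 3, and the final amplification yields $\mathsf{HVQMZK^*}(p,m,c,s)\subseteq\mathsf{HVQMZK^*}(p,3,1-\frac{1}{exp},1-\frac{1}{poly})$, as claimed.
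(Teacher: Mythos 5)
Your overall architecture (amplify, then iterate Theorem 3 to collapse the turn count) matches the paper's, but the amplification step has a genuine gap. The paper amplifies by \emph{sequential} repetition, accepting the blow-up in turns to $poly\cdot m$ (which Theorem 3 then removes); you amplify by \emph{parallel} repetition in order to keep the turn count at $m$, and the justification you offer for its soundness does not hold. Your key claim is that the expected number of accepting coordinates is at most $sN$ because each coordinate individually accepts with probability at most $s$. For a one-round game this is true (the provers can sample the other coordinates' questions from shared randomness), but the protocols here are multi-turn: verifier copy $i$'s later messages depend on its private register and on the earlier messages it received from \emph{all} $p$ provers. To turn the marginal of a parallel strategy on coordinate $j$ into a legal single-copy strategy, some prover would have to simulate verifier copy $i$ for $i\neq j$, i.e.\ apply a circuit acting jointly on $\mathsf{V}^{(i)}$ and on message registers held by \emph{different, non-communicating} provers. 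No single prover can do this, so the per-coordinate bound---and hence the linearity-of-expectation step---fails; this is exactly why parallel repetition of multi-round multi-prover protocols is not a routine fact even classically. (Even granting the bound, Markov only gives threshold-acceptance probability $1-\Omega(1/poly)$, not the $\frac{1}{exp}$ soundness you want to feed into the next stage, and ``further repetition'' to fix this reintroduces the same problem.) The fix is the paper's: repeat sequentially $poly$ times with a threshold decision; conditioned on any past transcript the residual strategy on the current copy is a valid single-copy strategy, so a martingale/Chernoff bound applies, yielding $\mathsf{HVQMZK^*}(p,poly\cdot m, 1-\frac{1}{exp},\frac{1}{exp})$, and only then is Theorem 3 invoked.

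Two smaller points. First, your tracking of parameters under iteration of Theorem 3 is off: the map $s\mapsto\frac{1+\sqrt{s}}{2}$ drives the soundness \emph{toward} $1$ (the gap $1-s$ shrinks by roughly a factor of $4$ per application), so after $O(\log m)$ iterations starting from $s=\frac{1}{exp}$ you land at soundness $1-\frac{1}{poly}$ and completeness $1-\frac{1}{exp}$---which is exactly the target, not constants $\frac{3}{4}$ and $\frac{1}{4}$. Consequently your final ``short amplification within the three-turn format'' is both unnecessary and, being another parallel repetition of a multi-round multi-prover protocol, unsound for the same reason as above.
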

\begin{proof}[Proof(Sketch)] $\sf {HVQMZK^*}$$(p,m,c,s)\subseteq$ $\sf{HVQMZK^*}$$(p, poly\cdot m,1-\frac{1}{exp},\frac{1}{exp})$ is easily shown by repeating the original $\sf {HVQMZK^*}$$(p,m,c,s)$ protocol $poly$ times, threshold value decisions and applying Chernoff bounds. Apply Theorem 3 to $\sf{HVQMZK^*}$$(p, poly\cdot m,1-\frac{1}{exp},\frac{1}{exp})$.
\end{proof}
The next theorem is our first result. The reason why we add one more prover (prover 0) is to prevent prover $1,\ldots,p$ from acting on register $\mathsf{V}$ (the verifier's private register) by making prover 0 have the register $\sf V$. 

\begin{theo}
$\sf{HVQMZK^*}$$(p,3,1-\epsilon,1-\delta)\subseteq \sf{QMZK^*}$$(p+1,2,1-\frac{\epsilon}{2},1-\frac{\delta^2}{8})$.
\end{theo}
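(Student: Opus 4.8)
The plan is to first put the $3$-turn honest-verifier protocol into the public-coin normal form --- the analogue for $\mathsf{HVQMZK^*}$ of the public-coin form of $\mathsf{QMIP^*}$ in \cite{KKMV} --- in which the verifier's single message is a uniformly random bit broadcast to all $p$ provers, and then to remove both the verifier's control over that bit and the verifier's first turn by delegating them to the provers. Concretely, I would build a $(p+1)$-prover, $2$-turn protocol in which the provers additionally share a $(p+1)$-qubit GHZ state $\frac{1}{\sqrt{2}}(|0\rangle^{\otimes(p+1)}+|1\rangle^{\otimes(p+1)})$, one qubit each, the new prover $0$ holds the register playing the role of the verifier's private register $\mathsf{V}$, and at the first turn the verifier privately flips a coin to choose between two equally likely sub-protocols: (i) the \emph{GHZ test}, in which the verifier collects the provers' GHZ shares and checks the GHZ correlations; and (ii) a run of the original protocol in which every prover measures its GHZ qubit in the computational basis to obtain the common public coin $r$, prover $0$ writes the matching verifier state into $\mathsf{V}$, and at the second turn all provers send their message registers (prover $0$ returning $\mathsf{V}$), after which the verifier applies $V^2$ and its final measurement. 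The verifier accepts iff the chosen sub-protocol accepts. Delegating $\mathsf{V}$ to a dedicated prover $0$ is exactly what prevents the provers running the original strategy from acting on the verifier's private data once that turn has been handed off.

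For completeness, honest provers prepare a genuine GHZ state together with the honest state $|\psi\rangle$ of the original protocol: the GHZ test then passes with certainty, and in sub-protocol (ii) the computational-basis measurements produce a common uniform bit $r$ and the remainder is the original protocol, which accepts with probability at least $1-\epsilon$. Averaging the two branches gives acceptance probability at least $\frac{1}{2}\cdot 1+\frac{1}{2}(1-\epsilon)=1-\frac{\epsilon}{2}$.

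Soundness is where the real work lies, and where I expect the main obstacle. Fix $x\notin L$ and an arbitrary strategy of the $p+1$ provers, and let $\eta$ be the probability that they fail the GHZ test. Using the rigidity/self-testing of the GHZ correlations together with Lemma~\ref{2PO} (the $\sqrt{\,\cdot\,}$ gentle-measurement bound) to relate the provers' state conditioned on passing the test to an ideal GHZ state, I would argue that, conditioned instead on the verifier choosing sub-protocol (ii), the provers are within trace distance $O(\sqrt{\eta})$ of having used an ideal GHZ state, hence the bit extracted in (ii) is $O(\sqrt{\eta})$-close to a single uniformly random bit held consistently by every prover; feeding this into the soundness guarantee of the original protocol, sub-protocol (ii) accepts with probability at most $(1-\delta)+O(\sqrt{\eta})$. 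The overall acceptance probability is then at most $\frac{1}{2}(1-\eta)+\frac{1}{2}\big((1-\delta)+O(\sqrt{\eta})\big)$, and optimizing the adversary's trade-off in $\eta$ while tracking the constants coming from the GHZ test yields rejection probability at least $\delta^2/8$. The delicate point is precisely this quantitative step --- turning ``passes the GHZ test with probability $1-\eta$'' into ``the downstream extracted coin is $O(\sqrt{\eta})$-far from a shared fair coin'' with constants sharp enough to land at $\delta^2/8$ --- and threading it correctly through Lemma~\ref{2PO}.

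For zero-knowledge, given a malicious verifier $V'$ I would build the simulator by composing two pieces. The GHZ-test branch is simulable exactly, because there the honest provers' reduced states are fixed marginals of the GHZ state, independent of $V'$'s auxiliary input in $\mathsf{W}$, so the simulator produces that transcript directly (and the verifier's first-turn message, being generated by $V'$ alone, is reproduced by running $V'$). The protocol branch is handled by the honest-verifier simulator $\Psi_V$ guaranteed for the original $3$-turn protocol: the crucial feature of the construction is that the public coin is now produced by the provers and $\mathsf{V}$ by prover $0$, so $V'$ has no influence over the coin and, from the provers' point of view, behaves like the honest verifier followed by efficient post-processing --- exactly the situation $\Psi_V$ covers --- and composing $\Psi_V$ with $V'$'s circuit simulates this branch. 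Indistinguishability of the combined simulator from the real interaction then follows from the honest-verifier hypothesis together with the exactness of the GHZ-branch simulation. A subtlety worth noting is that the malicious verifier's freedom to send different first-turn messages to different provers can only affect the GHZ-test branch (which is simulable regardless) and cannot touch the coin used in the protocol branch, so it compromises neither zero-knowledge nor, via the GHZ test, soundness.
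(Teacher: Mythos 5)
Your construction is essentially the paper's: the public-coin normal form, an added prover~$0$ holding $\mathsf{V}$, a shared $(p+1)$-qubit GHZ state replacing the broadcast coin, and a verifier who randomly either checks the GHZ state or runs the two-turn forward/backward test on the coin it extracts; your completeness and zero-knowledge discussions also track the paper's. The gap is in soundness, which you rightly call the crux but then aim at with the wrong tool. No rigidity or self-testing of GHZ correlations is available or needed here: the verifier physically receives all $p+1$ shares and projects them onto $\frac{1}{\sqrt2}(|0^{p+1}\rangle+|1^{p+1}\rangle)$, so ``passing with probability $1-\eta$'' just means the state has that much overlap with the GHZ state --- there is no nonlocal game to be rigid. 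The actual argument runs the reduction in the opposite direction and is \emph{exact} except for one step: given a cheating strategy $(\rho,\{U_g^i\},\{U_h^i\})$ for the new protocol, one builds honest-verifier provers whose initial state is $\rho'$, the normalized post-selection of $(\otimes_i U_g^i)\rho(\otimes_i U_g^{i\dagger})$ on the GHZ projection, and who, on receiving the coin $b$, write $|b\rangle$ into $\mathsf{G}_i$ and apply $U_h^i(U_g^i)^\dagger$. Since the history test only measures $\mathsf{G}_0$ in the computational basis and discards $\mathsf{G}_1,\dots,\mathsf{G}_p$, the GHZ registers may be dephased, and the dephased GHZ state is exactly $\frac12(|0^{p+1}\rangle\langle0^{p+1}|+|1^{p+1}\rangle\langle1^{p+1}|)$, i.e.\ a perfectly correlated uniform coin; hence the honest-verifier rejection probability equals $\frac{1}{1-\epsilon_1}$ times the history-test rejection probability from $\rho'$ with \emph{no} error term. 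The only approximation is Lemma~\ref{2PO} applied to the provers' \emph{initial state}, giving $\|\rho'-\rho\|\le\sqrt{\epsilon_1}$ and $p_{rej,hv}\le\sqrt{\epsilon_1}+\epsilon_2\le2\sqrt{2p_{rej}}$, whence $\delta\le 2\sqrt{2p_{rej}}$ yields $\delta^2/8$. Your version both misplaces where the approximation lives (an $O(\sqrt\eta)$ error on the ``extracted bit'' rather than on the initial state) and omits the construction of the honest-verifier strategy, which is the content of the proof.

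Two smaller points. First, in your branch (ii) the verifier must learn the coin to decide between applying $V_2$ and applying $V_1^\dagger$; the paper arranges this by having prover~$0$ hand over its GHZ share $\mathsf{G}_0$, which the verifier measures in the computational basis --- without something like this your branch (ii) is not executable. Second, asserting that inconsistent first-turn messages ``can only affect the GHZ-test branch'' is not an argument: the verifier may ask some provers for GHZ shares and others for history messages and correlate what it receives. The paper reduces every mixed query pattern to the all-history-test pattern (discarding registers, and locally preparing a fresh GHZ state for the all-GHZ pattern), using that a prover asked for the GHZ test sends exactly the same classical bit a history-test prover would have measured; you need to make this reduction explicit.
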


\begin{figure}
\hrulefill\\
Let $V_1,V_2$ be the circuit that the verifier of a three-turn protocol for $\mathsf{HVQMZK^*}(p,3,1-\epsilon, 1-\delta)$ applies after receiving message registers $(\mathsf{M}_1,\mathsf{M}_2,...,\mathsf{M}_p)$ from provers $1,...,p$, respectively, at the first turn and the third turn. The two-turn protocol is the following;
\begin{enumerate}
\item Select $b\in\{0,1\}$ uniformly at random. Send $b$ to provers $1,...,p$, and nothing to prover $0$.
\item Do the following tests depending on $b=0,1$:
\begin{enumerate}
\item %[($b=0$)(Forward test)]
($b=0$: Forward test) Receive $\mathsf {V}$ from prover 0 and receive $\mathsf {M}_i$ from prover $i$ for $i=1,...,p$. Apply $V_2$ to $(\mathsf{V},\mathsf{M}_1,\mathsf{M}_2,...,\mathsf{M}_p)$. Accept if the original verifier of the three-turn protocol accepts the state in $(\mathsf{V},\mathsf{M}_1,\mathsf{M}_2,...,\mathsf{M}_p)$. Reject otherwise.\\
\item %[($b=1$)(Backward test)]
($b=1$: Backward test) Receive $\mathsf {V}$ from prover 0 and receive $\mathsf {M}_i$ from prover $i$ for $i=1,...,p$. Apply $(V_1)^\dagger$ to $(\mathsf{V},\mathsf{M}_1,\mathsf{M}_2,...,\mathsf{M}_p)$. Accept if all qubits in $\mathsf{V}$ are $|0\rangle$, otherwise reject.
\end{enumerate}
\end{enumerate}
\hrulefill
\caption{$(p+1)$-prover two-turn honest verifier zero-knowledge protocol constructed from a three-turn protocol for $\mathsf{HVQMZK^*}(p,3,1-\epsilon, 1-\delta)$.} 
\label{2turn}
\quad\\
\hrulefill\\
$\mathsf{V},\mathsf{M}_i$ are the same as Figure \ref{2turn}. In addition, we use $\mathsf{G}_0, \mathsf{G}_1, \mathsf{G}_2,...,\mathsf{G}_p$, all of which are 1 qubit registers.
At the start, prover $i$ has $\mathsf{G}_i$.
\begin{enumerate}
\item Select $b'\in\{0,1\}$ uniformly at random. Send $b'$  to provers $1,...,p$, and nothing to prover 0.
\item Do the following tests depending on $b'=0,1$:
\begin{enumerate}
\item
($b'=0$:GHZ test)\\
Prover 0 sends $\mathsf{V,G_0}$. Prover $i$ sends $\mathsf{G}_i$, for $i=1,...,p$. Measure ($\mathsf{G}_0, \mathsf{G}_1, \mathsf{G}_2,...,\mathsf{G}_p$) by  the projection onto $\frac{1}{\sqrt2} (|0^{p+1}\rangle+|1^{p+1}\rangle)$. If $\frac{1}{\sqrt2} (|0^{p+1}\rangle+|1^{p+1}\rangle)$ is measured, then accept.
\item
($b'=1$:History test)\\
Prover 0 sends $\mathsf{V,G_0}$. Prover $i$ sends $\mathsf {M}_i$ and $\mathsf{G}_i$, for $i=1,...,p$. Verifier measures  $\mathsf{G}_0$ in the computational basis.  Let $b$ be the output of this measurement. If $b=0$, the verifier measures ($\mathsf{V},\mathsf{M}_1,...,\mathsf{M}_p)$ as the verifier in Figure \ref{2turn} does at the Forward test.  If $b=1$, the verifier measures ($\mathsf{V},\mathsf{M}_1,...,\mathsf{M}_p)$ as the verifier in Figure \ref{2turn} does at the Backward test .
\end{enumerate}
\end{enumerate}
\hrulefill
\caption{General zero-knowledge protocol based on the protocol in Figure \ref{2turn}.}
\label{2gz}
\end{figure}

\begin{proof}
First, we construct a $(p+1)$-prover two-turn protocol from a three-turn protocol for $\sf{HVQMZK^*}$$(p,3,1-\epsilon,1-\delta)$, in which message to provers is the same classical one bit except for prover 0, who receives no messages. The two-turn protocol is described in Figure \ref{2turn}. The main protocol of $\sf{QMZK^*}$$(p+1,2,1-\frac{\epsilon}{2},1-\frac{\delta^2}{8})$ is described in Figure \ref{2gz}. The proof of the correctness of the two-turn protocol in Figure \ref{2turn} is the same as the proof of Lemma 5.4 in \cite{KKMV}. We only discuss the conversion of a two-turn protocol in Figure \ref{2turn} into a general zero-knowledge protocol in Figure \ref{2gz}.\\\\
$\bf Completeness$: The conversion of the two-turn protocol to the general zero-knowledge preserves completeness since the provers only have to do the  GHZ test and the history test honestly.\\\\
$\bf Soundness$ :  We construct a $\mathrm{HVQMZK^*}$ protocol from a $\mathrm{QMZK^*}$ protocol  with a sufficient acceptance probability. Let $U_g^i,U_h^i$ be the unitary operators that prover $i$, for $i=1,\ldots, p$, does for the  GHZ test and the history test, respectively. (Prover 0 receives no messages, and hence we can assume prover 0 does only the identity operator.)

Let $\rho'':=\rm{Tr}_{GHZ}$$(\Pi_{GHZ}(\otimes_{i=1}^p U_g^i)\rho(\otimes_{i=1}^p (U_g^i)^\dagger)\Pi_{GHZ})$. Here, $\rho$ is the initial state shared by the provers in the $\mathrm{QMZK^*}$ protocol and $\Pi_{GHZ}$ is a projection onto $\frac{|0^{p+1}\rangle+|1^{p+1}\rangle}{\sqrt2}$ of registers ($\mathsf{G}_0, \mathsf{G}_1, \mathsf{G}_2,...,\mathsf{G}_p$) which is used in the  GHZ test, and $\rm Tr_{GHZ}$ is the partial trace operation on registers ($\mathsf{G}_0, \mathsf{G}_1, \mathsf{G}_2,...,\mathsf{G}_p$). Let $\rm Tr$$\rho''=1-\epsilon_1$, where $\epsilon_1$ is the rejection probability of the  GHZ test. Let $\frac{\rho''}{1-\epsilon_1}:=\rho'$. We use this state as the initial state of provers. Let $\epsilon_2$ be the rejection probability of the history test.

Now we construct the strategy of  the two-turn $\rm{HVQMZK^*}$ protocol from the strategy of $\rm{QMZK^*}$ protocol; The private registers of prover $i$ are $\mathsf{P}_i,\mathsf{G}_i$, for $i=1,...,p$. Initially prover 0 has only register $\mathsf{V}$. The initial state in ($\mathsf{P}_1,...,\mathsf{P}_p,\mathsf{M}_1,...,\mathsf{M}_p$) is $\rho'$. If the verifier sends $b$ to prover $i$ ($i=1,...,p$), prover $i$ sets $|b\rangle$ in register $\mathsf{G}_i$, applies $U^i_h{U^i_g}^\dagger$ and sends $\mathsf{M}_i$ to the verifier. Prover 0 always sends $\mathsf{V}$.

The rejection probability $p_{rej,hv}$ of this $\mathrm{HVMQZK^*}$ protocol is bounded by the rejection probability $p_{rej}$ of the general zero-knowledge protocol as follows. Here, $\Pi_{rej}$ is the projection onto rejected states.

\begin{equation}
\begin{split}
p_{rej,hv}&:=\frac{1}{2}\sum_{b\in\{0,1\}}\mathrm{Tr}[\Pi_{rej}(\otimes_{i=1}^p U_h^i {U_g^i}^\dagger) (|b^p\rangle\langle b^p|\otimes \rho') (\otimes_{i=1}^p U_g^i{U_h^i}^\dagger)]\\
&=\frac{1}{1-\epsilon_1}\mathrm{Tr} [\Pi_{rej} (\otimes_{i=1}^p U_h^i{U_g^i}^\dagger)\Pi_{GHZ}(\otimes_{i=1}^p U_g^i)\rho (\otimes_{i=1}^p {U_g^i}^\dagger)\Pi_{GHZ}(\otimes_{i=1}^p U_g^i{U_h^i}^\dagger)]\\
&\le \sqrt\epsilon_1+\mathrm{Tr} [\Pi_{rej}(\otimes_{i=1}^p U_h^i {U_g^i}^\dagger)(\otimes_{i=1}^p U_g^i)\rho (\otimes_{i=1}^p {U_g^i}^\dagger)(\otimes_{i=1}^p U_g^i{U_h^i}^\dagger)]\\
&=\sqrt\epsilon_1+\mathrm{Tr}[\Pi_{rej}(\otimes_{i=1}^p U_h^i)\rho(\otimes_{i=1}^p {U_h^i}^\dagger)]\\
&=\sqrt\epsilon_1+\epsilon_2\le\sqrt\epsilon_1+\sqrt\epsilon_2\le2\sqrt {2p_{rej}}.
\end{split}
\end{equation}
The first equality follows from the definition of $\rho'=\frac{1}{1-\epsilon_1}\rho''$$=\frac{1}{1-\epsilon_1}\rm{Tr}_{GHZ}$$(\Pi_{GHZ}(\otimes_{i=1}^p U_g^i)\rho(\otimes_{i=1}^p (U_g^i)^\dagger)\Pi_{GHZ})$ and that $\frac{1}{2}(|0^p\rangle\langle0^p|+|1^p\rangle\langle1^p|)$ equals to the $p$-qubits substate 
of the $(p+1)$-qubit GHZ state. The first inequality follows since $\|\rho'-\rho\|\le\sqrt\epsilon_1$ holds by regarding $\{\Pi_{GHZ},\mathrm{Id}-\Pi_{GHZ}\}$ as a two-outcome POVM and applying Lemma 1.\\\\
$\bf Zero$ $\bf knowledge$: Before the verifier sends messages, the verifier gets no qubits from provers. Hence it is sufficient to construct a simulator of the malicious verifier for each of the verifier's possible messages. 

First, we observe that it is sufficient to prove only the case that the malicious verifier sends the same $b'$ to all provers except for prover 0. Here, we define the action of the honest prover $j$ in Figure 2 as follows;
 \begin{enumerate}
\item If the prover $j$ receives $b'=1$, then he/she measures the register $\mathsf{G}_j$ in the computational basis. Let the outcome be $b$. He/she acts as the prover $j$ in Figure 1 and sends $\mathsf{M}_j,\mathsf{G}_j$.
\item If the prover $j$ receives $b'=0$, then he/she sends $\mathsf{G}_j$. 
\end{enumerate}
This definition preserves the completeness parameter. Assume that the verifier sends $b'=1$ (history test) to prover $j$ and $b'=0$ (GHZ test) to prover $i$. Prover $i$ sends register $\mathsf{G}_i$ to the verifier, but the qubit in $\mathsf{G}_i$ is the same $|0\rangle$ or $|1\rangle$ of $\mathsf{G}_j$. Hence, if the case that the verifier sends $b'=1$ to all provers can be simulated, the case that prover $i$ receives $b'=0$ can be also simulated since it is sufficient to simulate any verifier who sends $b'=1$ to all provers and aborts $\mathsf{M}_i$. 

The case that the verifier sends $b'=0$ to provers $1,...,p$ can be also simulated since it is sufficient to simulate the verifier who sends $b'=1$ to all provers, aborts $\mathsf{M}_i,\mathsf{G}_i$, and prepares $\frac{1}{\sqrt2} (|0^{p+1}\rangle+|1^{p+1}\rangle)$. The case the verifier sends $b'=1$ to all provers is obviously zero-knowledge since the verifier receives only the copies of a uniform random 1 bit and the state that the honest verifier in Figure \ref{2turn} receives.
\end{proof}
Theorem 4 needs an additional prover to eliminate the honest condition, but by the rewinding technique of \cite{Ko,W2}, we can eliminate the honest condition without additional provers. This result essentially follows by combining the proof of Theorem 4 and the rewinding, and our main contribution is described in the proof of Theorem 4. Hence we prove this result in Appendix. To prove this result, we have to restrict the verifier's message to one public classical bit by the technique of \cite{KKMV}. Let $\sf{HVQMZK^*}_{pub,1}$$(k,3,c,s)$ be the class of problems verified by such $\rm HVQMZK$ systems. We omit the proof of the next theorem since it is also the same as \cite{KKMV}.
\begin{theo}Let $\sqrt s\le c$. Then, $\sf {HVQMZK^*}$$(p,3,c,s)\subseteq$ $\sf{HVQMZK^*}_{pub,1}$$(p,3,\frac{1+c}{2},\frac{1+\sqrt s}{2})$.
\end{theo}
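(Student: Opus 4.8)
This is the public‑coin normal‑form step of~\cite{KKMV} transported to the honest‑verifier zero‑knowledge setting, and — exactly as for Theorem~3 — the only thing one adds to the \cite{KKMV} argument is the routine observation that the transformation preserves honest‑verifier zero knowledge. So the plan is: (i) recall the \cite{KKMV} construction that turns the verifier's single message in a three‑turn protocol into one public classical bit; (ii) check completeness and soundness, tracking the parameter loss; (iii) build the new honest‑verifier simulator out of the old one.

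For (i): write the given three‑turn protocol so that the provers first send the message registers $\mathsf M_1,\dots,\mathsf M_p$, the verifier applies a circuit $V_1$ to $(\mathsf V,\mathsf M_1,\dots,\mathsf M_p)$ and returns the message registers, the provers answer, and the verifier applies $V_2$ and measures the output qubit (this is the $V_1,V_2$ notation already used in Figure~\ref{2turn}). The verifier's returned message may be many qubits and may be entangled with $\mathsf V$; we replace it by a single uniform bit $b$ that merely selects one of two tests. For $b=0$ (forward test) the verifier completes the interaction and applies $V_2$, accepting exactly when the original verifier would; for $b=1$ (backward test) the verifier applies $V_1^{\dagger}$ and accepts iff $\mathsf V$ has returned to $|0\cdots0\rangle$, i.e. iff the state at the first verifier step was a legitimate one. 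Unlike Figure~\ref{2turn}, we are not allowed an extra prover, so the verifier itself must retain $\mathsf V$ across the rounds; the honest provers compensate by pre‑sharing, as the (unconstrained) state $|\psi\rangle$, exactly the entanglement that lets them honestly answer whichever test is requested. Completeness is then immediate: the backward test is passed with certainty because $V_1^{\dagger}$ exactly undoes $V_1$ on the honest first‑step state (whose $\mathsf V$‑part is $|0\cdots0\rangle$), and the forward test is passed with probability at least $c$ because it is the tail of the original protocol run on the honest state; averaging over $b$ gives $\tfrac{1+c}{2}$.

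For (ii)–(iii): for soundness, fix $x\notin L$ and any prover strategy, and suppose it passes the backward test with probability $1-\eta$. Regarding ``apply $V_1^{\dagger}$, then measure whether $\mathsf V=|0\cdots0\rangle$'' as a two‑outcome POVM and applying Lemma~\ref{2PO}, the (renormalized) state on which the forward test would be run is within trace distance $\sqrt{\eta}$ of one in the legitimate first‑step subspace; on such a state the forward test is the original protocol and accepts with probability at most $s$, hence here with probability at most $s+\sqrt{\eta}$. Averaging the two branches and optimizing over $\eta$ exactly as in the analysis of Theorem~3 gives acceptance probability at most $\tfrac{1+\sqrt{s}}{2}$ (and $\sqrt s\le c$ keeps a gap). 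For zero knowledge: the honest verifier now sends only the public bit $b$ and then receives, in the forward case, the same registers as the honest verifier of the original protocol at its last step, and in the backward case the state $V_1^{\dagger}$ applied to an honest first‑step state — so the new honest view is $b$ together with a state that is a polynomial‑time transformation (namely $V_1^{\dagger}$, or nothing) of a prefix of the old honest view. Feeding the old honest‑verifier simulator through that transformation, and tensoring with a freshly sampled bit $b$, gives the required simulator, and indistinguishability is preserved because the transformation is a fixed polynomial‑size circuit. I expect the delicate point to be precisely the $p$‑prover constraint in (i): because the verifier must hold $\mathsf V$ itself rather than parking it with an auxiliary prover as in Figure~\ref{2turn}, one has to arrange the backward (consistency) test so that the provers remain fully accountable even though none of them ever touches $\mathsf V$ — this is what forces the extra round (so the turn count stays $3$ rather than dropping to $2$) and is the one place where the \cite{KKMV} bookkeeping must be redone with care, whereas the zero‑knowledge addition is, as claimed there, straightforward.
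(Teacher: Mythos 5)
There is a genuine gap, and it sits exactly at the point you flag as "the delicate point": your resolution of who holds $\mathsf V$ is wrong. You have the verifier retain $\mathsf V$ across the rounds and merely receive $\mathsf M_1,\dots,\mathsf M_p$ from the provers. But then the forward test cannot be carried out with the stated completeness: applying $V_2$ requires $(\mathsf V,\mathsf M_1,\dots,\mathsf M_p)$ to be in the state of the original protocol just before $V_2$, and the $\mathsf V$ part of that state is in general entangled with the provers' registers (it is $V_1$ applied to the first-round messages). A register the verifier has kept untouched in $|0\cdots 0\rangle$ cannot be in that state, and no amount of pre-shared entanglement among the provers lets them act on a register the verifier holds. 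Symmetrically, the backward test as you describe it ($V_1^\dagger$ applied to the verifier's own $\mathsf V$, which was never acted on by $V_1$, tensored with prover-supplied $\mathsf M_i$) does not certify anything about the first step of the original protocol. The construction the paper intends (it omits the proof, citing \cite{KKMV}, but the protocol is written out explicitly in Figure~\ref{3turn2} of the Appendix) resolves this the other way: at the first turn \emph{prover 1 sends the register $\mathsf V$ to the verifier}; the verifier then treats the received $(\mathsf V,\mathsf M_1,\dots,\mathsf M_p)$ as a claimed snapshot of the interaction and, depending on the public bit $b$, either runs it forward ($V_2$ plus the accepting measurement) or backward ($V_1^\dagger$ plus the check that $\mathsf V$ is $|0\cdots0\rangle$). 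This is why no extra prover is needed and why the turn count stays at three: turn 1 is "prover 1 sends $\mathsf V$", turn 2 is "verifier sends $b$", turn 3 is "provers send $\mathsf M_i$". Your protocol, as written, is not the same object and does not satisfy completeness $\tfrac{1+c}{2}$.

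A secondary issue: your soundness computation ("pass the backward test with probability $1-\eta$, apply Lemma~\ref{2PO}, accept the forward test with probability at most $s+\sqrt\eta$, average and optimize over $\eta$") does not yield $\tfrac{1+\sqrt s}{2}$; the quantity $\tfrac12\bigl((1-\eta)+s+\sqrt\eta\bigr)$ maximized over $\eta$ exceeds $\tfrac{1+\sqrt s}{2}$ for small $s$. The \cite{KKMV} bound comes from a sharper argument (bounding the \emph{product} of the two branch acceptance amplitudes by $\sqrt s$ and using $a_0+a_1\le 1+a_0a_1$), not from the gentle-measurement lemma plus a union bound. Your honest-verifier simulator construction is fine in spirit once the protocol is corrected, since the honest view really is a fixed polynomial-time transformation of snapshots the original simulator can produce.
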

We prove the next theorem in Appendix. This result also needs the GHZ test, and rewinding alone is not sufficient to prove this result.
\begin{theo}
$\sf{HVQMZK^*}_{pub,1}$$(p,3,1-\epsilon,1-\delta)\subseteq \sf{QMZK^*}$$(p,3,1-\frac{\epsilon}{2},1-\frac{\delta^2}{8})$.
\end{theo}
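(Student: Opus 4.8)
The plan is to reuse the GHZ-test construction behind Theorem~4, reorganised as a genuine three-turn protocol so that the verifier's private register $\mathsf V$ never leaves the verifier, and then to repair zero-knowledge with the quantum rewinding technique of \cite{Ko,W2}. Concretely, from the hypothesised three-turn public-coin protocol for $\mathsf{HVQMZK^*}_{pub,1}(p,3,1-\epsilon,1-\delta)$ — whose only verifier message is a single public bit — I would have the provers move first, sending the analogue of what the provers send in Figure~\ref{2gz} but with $\mathsf V$ excluded (and retained by the verifier), keeping the $p$-party GHZ state $\frac{1}{\sqrt2}(|0^p\rangle+|1^p\rangle)$ in registers $\mathsf G_1,\dots,\mathsf G_p$; the verifier then flips an independent bit $b'$ to choose between (i) a GHZ test projecting $(\mathsf G_1,\dots,\mathsf G_p)$ onto $\frac{1}{\sqrt2}(|0^p\rangle+|1^p\rangle)$, and (ii) a history test in which the branch bit $b$ is obtained by combining the verifier's message with the common value of the provers' GHZ shares — so that $b$ is uniformly random and not under the verifier's control — the provers respond as in Figure~\ref{2turn}, and the $\mathsf G_i$'s are returned so the verifier learns $b$ and applies the corresponding check ($V_2$, or $V_1^\dagger$ followed by the zero-test on $\mathsf V$) to $(\mathsf V,\mathsf M_1,\dots,\mathsf M_p)$. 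Because $\mathsf V$ stays with the verifier, there is no need for the passive prover $0$ that Theorem~4 used.

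Completeness is immediate: honest provers pass the GHZ test with certainty and the history test as well as in the original protocol, so $p_{acc}\ge 1-\epsilon/2$. For soundness I would repeat the computation in the proof of Theorem~4 almost verbatim. Writing $p_{rej}=\frac12(\epsilon_1+\epsilon_2)$ with $\epsilon_1,\epsilon_2$ the rejection probabilities of the two tests, project the joint prover state onto the GHZ subspace; by Lemma~\ref{2PO} this perturbs the state by at most $\sqrt{\epsilon_1}$ in trace distance, and on the projected state the provers' history-test operations together with the readout of the $\mathsf G_i$'s act exactly as honest provers of the original public-coin protocol answering a uniformly random coin. Hence that protocol would accept with probability at least $1-(\sqrt{\epsilon_1}+\epsilon_2)\ge 1-2\sqrt{2p_{rej}}$, forcing $p_{rej}\ge\delta^2/8$ whenever the original soundness is $1-\delta$; the only change from Theorem~4 is that the coin is now produced by the provers themselves from a GHZ register rather than handed to the verifier.

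The substantive part is zero-knowledge, and — unlike in Theorem~4 — the GHZ test alone does not suffice, because without the passive prover $0$ the verifier becomes second-moving: its turn-$2$ message, and hence the provers' turn-$3$ responses, may depend adaptively on the turn-$1$ messages. I would first recycle the reduction from the proof of Theorem~4 to the case of a malicious verifier $V'$ sending the same selector $b'$ to every prover (a history-test prover's measurement collapses the whole GHZ, so the GHZ-test provers merely return copies of the resulting bit, and such a $V'$ is simulated by one that sends ``history test'' everywhere and discards the superfluous registers); the all-``GHZ test'' case is trivial, the simulator just preparing a fresh GHZ state. In the all-``history test'' case the verifier's view is the transcript of the original public-coin protocol for a uniformly random coin, available from the honest-verifier simulator $\Psi_V$ — but $\Psi_V$ fixes that coin internally, while $V'$'s adaptive turn-$2$ message together with the provers' GHZ determines the branch $b$ that actually occurs. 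I would reconcile the two with Watrous's quantum rewinding lemma \cite{Ko,W2}: the simulator runs $\Psi_V$ together with $V'$, reads the public coin $b_0$ off the simulated transcript, checks it against the branch $b$ produced in the interaction, keeps the run if they agree, and rewinds otherwise. The rewinding lemma's hypothesis — that the keep-probability is, up to negligible error, a constant ($\frac12$) independent of $V'$'s auxiliary input — should follow because the honest-verifier coin is a public coin, hence independent of the rest of the honest-verifier transcript and of the GHZ shares, whose marginal on the verifier is maximally mixed. The main obstacle I expect is exactly this analysis: verifying that near-input-independence rigorously against a fully adaptive $V'$, handling the interplay between the GHZ-test branch and the rewound history-test branch, and combining the rewinding error with the $\sqrt{\epsilon_1}$-type disturbance of the GHZ projection and the indistinguishability of $\Psi_V$ into one negligible bound.
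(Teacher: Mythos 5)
Your proposal follows essentially the same route as the paper's Appendix proof: restructure the public-coin protocol as a three-turn protocol in which the verifier receives $\mathsf V$ (together with a GHZ share) at the first turn and keeps it, use the GHZ test to force a consistent, prover-generated branch bit $b$, and repair zero-knowledge against the now second-moving adaptive verifier via the Kobayashi--Watrous rewinding, accepting a simulated run exactly when the honest-verifier simulator's coin agrees with the branch actually realized and rewinding otherwise. The paper's own argument is likewise only a sketch at this level of detail (its Figure of the simulator XORs the simulated coin $\mathsf C$ with the malicious verifier's coin $\mathsf C'$ and rewinds on mismatch, arguing the keep-probability is negligibly close to $1/2$ by independence), so your identification of the remaining work matches what the paper leaves implicit.
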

\section{LHI protocols and computational quantum zero-knowledge systems}
In this section, we construct the zero-knowledge system for $\sf QMIP^*$. The construction of a zero-knowledge system for $\sf QMIP^*$ consists of the following three steps;% under some computational assumption. %The next theorem is the main result.
%\begin{theo}[Restatement of Theorem 2]
%If there exists a quantum binding and computational hiding bit commitment, every language which can be computed by quantum interactive proof systems with $p$ provers has a computational quantum zero-knowledge interactive proof system with $p+1$ provers.
%\end{theo}
\begin{enumerate}
\item Construction of the Local Hamiltonian based Interactive protocol (LHI protocol) corresponding to the $p$-prover $\mathrm{QMIP^*}$ protocol, which extends the protocol for checking the local Hamiltonian problem corresponding to a QMA protocol in \cite{BJSW} to the $\rm QMIP^*$ case.
\item Construction of what we call the LHI+ protocol, which replaces the uniform random queries  by the GHZ test.
\item Zero-knowledge protocol for $\sf QMIP^*$ based on the LHI+ protocol.
\end{enumerate}
Our main contribution is step 1 and 2. Step 3 is a direct application of the technique of Broadbent et al. \cite{BJSW}. The analysis of step 3 is almost the same as \cite{BJSW}, and hence we only point out the main differences.
 In Section 4.1, we construct the LHI protocol and prove its validity. We only consider three-turn $\mathrm{QMIP^*}$ protocols, as this does not lose generality due to Lemma 4.2 of \cite{KKMV}.  In Section 4.2, we provide the LHI+ protocol and prove its validity. In Section 4.3, we overview the quantum zero-knowledge protocol for $\sf{QMA}$ and explain why it works for the LHI+ protocol. In Section 4.4, we construct the final zero-knowledge protocol. We analyze the final protocol in Section 4.5.

\subsection{Construction of the LHI protocol\label{LH2;sec}}
First, we construct a local Hamiltonian based Interactive protocol (LHI protocol) for a three-turn $\mathrm{QMIP^*}$  protocol $\cal{P}$. Intuitively, the LHI protocol checks the history state of the calculation of the interactive proof system in Figure \ref{1com}, which is transformed from the original three-turn protocol $\cal{P}$ with completeness $1-\frac{1}{exp}$ and soundness $1-\frac{1}{poly}$ in Figure \ref{3turn}. We can assume such completeness/soundness errors on $\cal{P}$ due to Lemma 4.2 of \cite{KKMV}. This transformation is done to locally check the communication of the interactions. Let $V_0=U_{t_0}\cdots U_1$ and  $V_1=U_T\cdots U_{t_0+2pl+p+1}$ be  the circuits which the verifier in $\cal{P}$ uses (see Figure \ref{3turn}).  Here, $l$ is the length of the message register $\mathsf{M}_i$  between the $i$-th prover and the verifier in $\cal{P}$, $t_0$ is the number of gates in $V_0$ and $(T-(t_0+2pl+p+1)+1)$ is the number of gates in $V_1$. The reason why the indices of $V_0,V_1$ are not successive is the increase of communication turns by the transformation from Figure \ref{3turn} to Figure \ref{1com}.  This transformation is necessary to make prover 0 send message registers $\mathsf{M}_1,...,\mathsf{M}_p$ in the LHI protocol to prevent the malicious attack on message registers by provers $1,...,p$ depending on the verifier's message.
We can assume all $\mathsf{M}_i$ have the same length. The LHI protocol uses $T$ registers $\mathsf{C}_1,...,\mathsf{C}_T$ and $p$ registers $\mathsf{Me}_1,...,\mathsf{Me}_p$ where each of those $(T+p)$ registers consists of a single qubit, in addition to  $\mathsf{M}_1,...,\mathsf{M}_p$, and $\mathsf{V}$. Intuitively, $\mathsf{C}_1,...,\mathsf{C}_T$ are the time counters, $\mathsf{Me}_1,...,\mathsf{Me}_p$ are 1 qubit message registers. The LHI protocol is described in Figure \ref{LHp}. Without loss of generality, we can assume $T^5\ll\epsilon^{-1}$ by Lemma 4.2 of \cite{KKMV}.

Though intuitively the LHI protocol corresponds to the protocol in Figure \ref{1com}, we show that if the LHI protocol can be accepted with high probability, then the protocol $\cal{P}$ in Figure \ref{3turn} can be accepted with high probability, since we put Figure \ref{1com} for ease of intuitive understanding LHI protocols, but the correctness of Figure \ref{1com} is logically unnecessary and proving the correctness is only redundant. The next lemma states that if there exist provers who pass the LHI protocol, then there exist provers who pass the original three-turn protocol $\cal P$. 

\begin{figure}
\hrulefill\\
\includegraphics[width=0.8\textwidth]{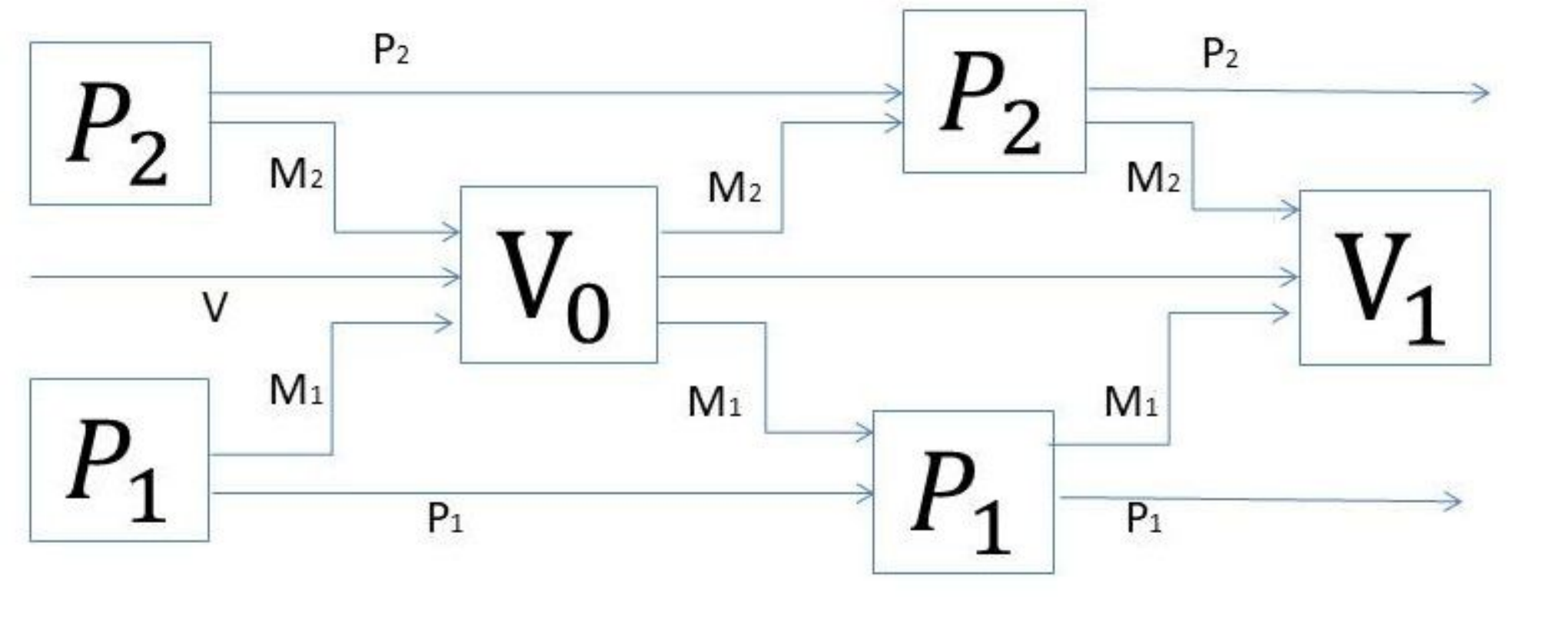}
\caption{Original three-turn protocol $\cal{P}$. In this figure, the number of provers is two.}
\label{3turn}
\hrulefill\\
\end{figure}

\begin{figure}
\hrulefill\\
\includegraphics[width=0.8\textwidth]{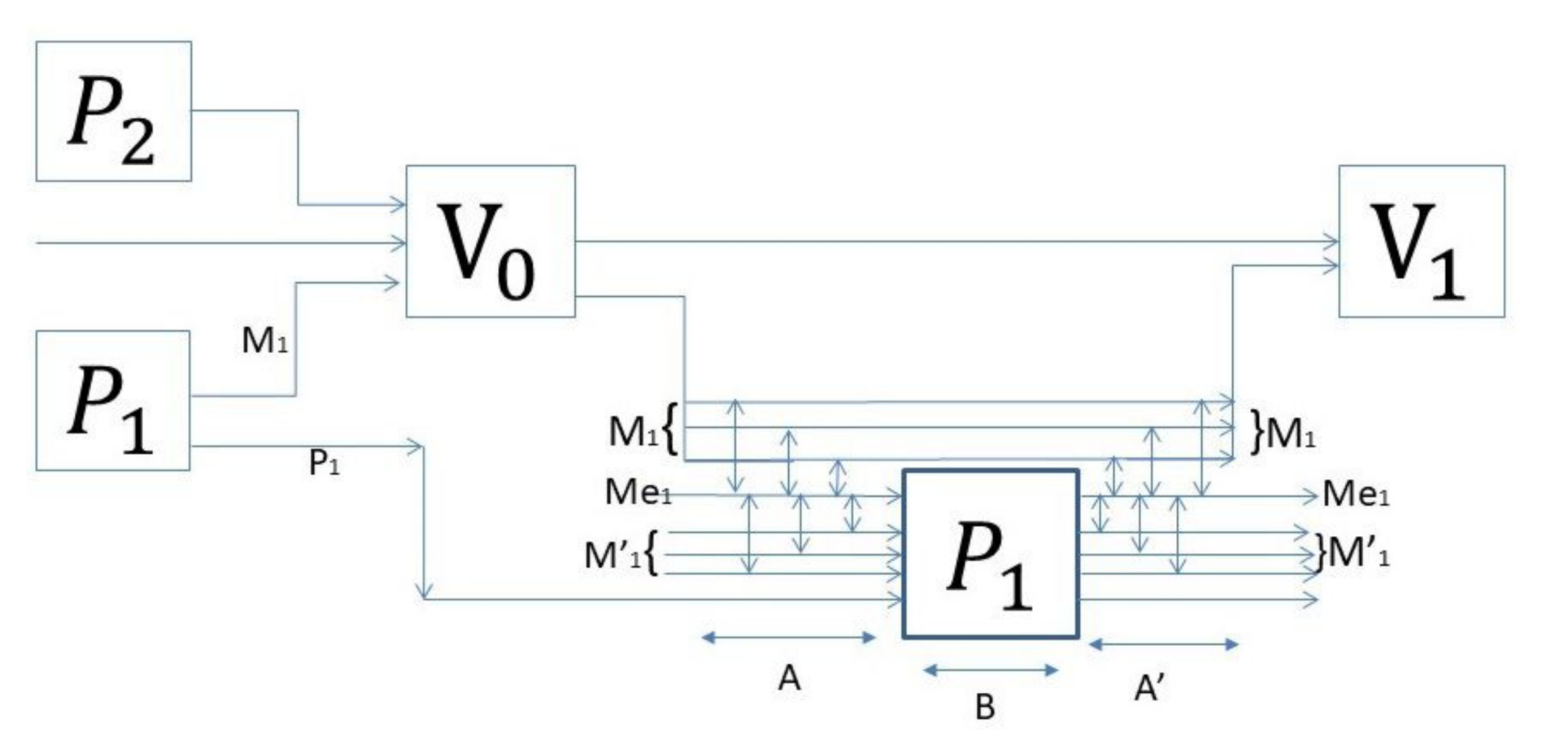}
\caption{Restricting communications to one qubit. In this figure, the length of message register is three. The arrows $\updownarrow$ are SWAP gates. $\mathsf{M'_1}$ is used by honest prover 1 to receive the qubits from the verifier. Note that $\sf{M'_1}$ is introduced to explain the behavior of the honest prover and hence it does not appear in the soundness analysis in the main text. Communications between the verifier and prover 2 are similarly transformed. A,B,A' denote the correspondence to the operators $A,B,A'$ in Eq.(\ref{opr}).}
\label{1com}
\hrulefill
\end{figure}

\begin{lemm}Suppose $T^5\ll\epsilon^{-1}$. If there are provers who pass the protocol in Figure \ref{LHp} with probability $1-\epsilon$, there are provers who pass the original three-turn protocol $\cal{P}$ with probability $1-O(\sqrt[4]{T^5\epsilon})$.
\end{lemm}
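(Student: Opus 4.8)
The plan is to run a Kitaev-style local-Hamiltonian argument adapted to the interactive multi-prover setting, using the structure of the LHI Hamiltonian established in Section~\ref{LH2;sec}. Write that Hamiltonian as $H=\sum_{r=1}^{N}H_r$, a sum of $N=O(T)$ terms of norm at most $1$ --- the input, propagation, clock-validity, message-register-validity and output-penalty terms attached to the transformed circuit of Figure~\ref{1com} --- and observe that the LHI verifier of Figure~\ref{LHp} picks a uniformly random $r$, measures $H_r$, and rejects with probability $\langle H_r\rangle$; hence passing the LHI protocol with probability $1-\epsilon$ forces $\mathrm{Tr}[H\rho]=N\epsilon=O(T\epsilon)$ for the state $\rho$ the provers supply. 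The first thing I would pin down is that $\rho$ is a genuine single operator: the transformation of Figure~\ref{3turn} into Figure~\ref{1com} puts all of $\mathsf V,\mathsf M_1,\dots,\mathsf M_p$ in the hands of prover~$0$, while provers $1,\dots,p$ keep only $\mathsf P_1,\dots,\mathsf P_p$ and receive from the honest verifier a message that leaves their reply non-adaptive, so the collective prover strategy may be taken to be a fixed density operator of which the verifier is simply handed the portions it asks for.

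For the geometric step, put $H'=H-H_{\mathrm{out}}$. By Section~\ref{LH2;sec} the null space of $H'$ is exactly the span of the history states $|\eta_\xi\rangle=\frac{1}{\sqrt{N+1}}\sum_t|\phi^\xi_t\rangle|t\rangle$ of \emph{legal} runs of the transformed protocol --- verifier and SWAP steps propagated correctly, the prover steps acting only on the registers of the respective prover, and $\mathsf V$, the $\mathsf M_i$ and the clocks initialised to $0$ --- and $H'$ has an inverse-polynomial spectral gap $\Delta=\Omega(1/\mathrm{poly}(T))$ above it, the analogue of Kitaev's bound \cite{KSV} with the Clifford-Hamiltonian and message-register terms treated as in \cite{BJSW}. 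Comparing energies, the weight of $\rho$ outside the null space is at most $\mathrm{Tr}[H'\rho]/\Delta=O(T\epsilon)/\Delta$, so after discarding that $O(\sqrt{T\epsilon/\Delta})$-small part --- invoking Lemma~\ref{2PO} to pass cleanly to the post-selected state --- and averaging over the resulting mixture of history states, there is a single legal history state $|\eta_{\xi_0}\rangle$ with $\langle\eta_{\xi_0}|H_{\mathrm{out}}|\eta_{\xi_0}\rangle=O(\sqrt{T\epsilon/\Delta})$. Since $H_{\mathrm{out}}$ carries the $1/(N+1)$ weight of the final clock value, the rejection probability of the transformed circuit on input $\xi_0$ equals $(N+1)\langle\eta_{\xi_0}|H_{\mathrm{out}}|\eta_{\xi_0}\rangle=O(T\sqrt{T\epsilon/\Delta})$, which with $\Delta=\Omega(1/\mathrm{poly}(T))$ and the hypothesis $T^5\ll\epsilon^{-1}$ collects to $O(\sqrt[4]{T^5\epsilon})$.

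Finally I would read a strategy for $\mathcal{P}$ off $|\eta_{\xi_0}\rangle$: take the (possibly entangled) state it carries on $\mathsf P_1\otimes\cdots\otimes\mathsf P_p$ at clock $0$ as the provers' shared state and the operations applied at the prover steps of the history as $P^1_i,P^2_i$, regrouping the one-qubit SWAP decomposition of Figure~\ref{1com} back into ordinary rounds and dilating any non-unitary prover transition into a private ancilla; since Figure~\ref{1com} faithfully simulates Figure~\ref{3turn}, these provers make $\mathcal{P}$ accept with probability $1-O(\sqrt[4]{T^5\epsilon})$. I expect the main obstacle to be the structural input to the geometric step --- that the null space of $H'$ is precisely the legal history states, that its gap is inverse-polynomial, and, above all, that the prover transitions occurring in an \emph{arbitrary} low-energy state respect non-communication, so that the extracted $P^j_i$ are bona fide separate-prover operations. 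That is exactly what the added prover~$0$ and the consistency-type terms of the LHI protocol are designed to enforce; granting it, the remainder is routine triangle-inequality bookkeeping.
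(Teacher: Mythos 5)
Your proposal diverges from the paper at the very first step, and the divergence is fatal. You assume that the provers' replies in the LHI protocol are non-adaptive, so that the acceptance probability can be written as $1-\frac{1}{N}\mathrm{Tr}[H\rho]$ for a \emph{fixed} state $\rho$ and a \emph{fixed} Hamiltonian $H=\sum_r H_r$. This is false: in Figure \ref{LHp} the verifier sends the query $t'$ to provers $1,\dots,p$ \emph{before} they return $\mathsf{Me}_i$, so prover $k$ may first apply an arbitrary $t'$-dependent unitary ($D_{k,j}$ or $E_k$ in the paper's notation) on $(\mathsf{P}_k,\mathsf{Me}_k)$. There is therefore no single operator whose average energy is being estimated, and the Kitaev-style geometric step (null space of $H'$ equals the legal history states, inverse-polynomial gap, project onto the null space) has no object to apply to. Worse, if you really force the provers to be non-adaptive, the fixed term (c) pins the prover-step propagation to the literal $X_{\mathsf{Me}_k}$, so the null space of your $H'$ contains only histories in which the provers perform a bit flip on a one-qubit register --- incompatible with completeness. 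The adaptive unitaries are not a nuisance to be assumed away; they are exactly where the provers' computation lives. The paper's proof is organized around this fact: it conjugates the communication and prover terms by the provers' actual query-dependent unitaries (yielding propagation by $D_{k,j}^\dagger\, SWAP\, D_{k,j}$ and $E_k^\dagger X_{\mathsf{Me}_k}E_k$, each local to one prover and hence a legitimate operation of that prover in $\cal P$), takes the shared state to be $\Pi_{0^n}W|\varphi\rangle$ for an explicit rewinding operator $W$, and bounds the rejection probability of $\cal P$ by telescoping triangle inequalities; no spectral gap is ever invoked.

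Two secondary points. Even in a setting where your framing applied, you assert rather than prove the two structural facts your argument needs (the exact characterization of the null space and a gap $\Delta=\Omega(1/\mathrm{poly}(T))$), and your final bookkeeping $O(T\sqrt{T\epsilon/\Delta})=O(\sqrt[4]{T^5\epsilon})$ only goes through if $\Delta=\Omega(1/T^{2})$ or better, so ``inverse polynomial'' does not by itself recover the stated exponent. And your closing worry --- that non-communication of the extracted provers must be enforced by prover 0 and the consistency terms --- is misplaced: locality of the extracted operations comes for free from the fact that each adaptive unitary acts only on $(\mathsf{P}_k,\mathsf{Me}_k)$, once those unitaries are kept in the analysis rather than assumed away.
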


\begin{figure}
\hrulefill\\
(A). LHI protocol
\begin{enumerate}
\item The verifier sends $t'\in [2T+n]$ to provers $1,...,p$ and nothing to prover 0.
\item Prover $i$ sends $\mathsf{Me}_i$. Prover 0 sends $\mathsf{V},\mathsf{M}_1,..., \mathsf{M}_p,\mathsf{C}_1,...,\mathsf{C}_T$.
\item The verifier measures these qubits by $H_{t'}$.
\end{enumerate}
Here, $n$ is the number of qubits of $\mathsf{V}$. We can assume $n\le T$.\\
%\label{LHam}
\hrulefill\\
(B). Hamiltonians used in the LHI protocol\\\\
$SWAP _{(k,j),\mathsf{Me}_k}$ is the SWAP operator on the $j$-th qubit of $\mathsf {M}_k$ and $\mathsf{Me}_k$, $CNOT_{\mathsf{C}_{t_0+pl+k},\mathsf{Me}_k}$ is a CNOT operator which is controlled by $\mathsf{C}_{t_0+pl+k}$ and acts on $\mathsf{Me}_k$. $\Pi_{acc}$ is a projection that the verifier in $\cal{P}$ accepts. ${\rm Id}$ is an identity.
The set $\{H_t|t\in [2T+n]\}$ has the following Hamiltonians as described in (a), (b), (b'), (c), (d), (e), and (f).\\\\
In (a), (b), (b'), we define $H_t$ by Eq. (\ref{lh}). 
\begin{equation}
\label{lh}
H_t:=|10\rangle\langle10|_{\mathsf{C}_{t-1},\mathsf{C}_{t+1}}\otimes\frac{1}{2}(-|1\rangle\langle0|_{\mathsf{C}_t}\otimes U'_t- |0\rangle\langle1|_{\mathsf{C}_t}\otimes {U'_t}^\dagger+|0\rangle\langle0|_{\mathsf{C}_t}\otimes \mathrm{Id} +|1\rangle\langle1|_{\mathsf{C}_t}\otimes \mathrm{Id}).
\end{equation}
If $t=1$, we replace $|10\rangle\langle10|_{\mathsf{C}_{t-1},\mathsf{C}_{t+1}}$ by $|0\rangle\langle0|_{\mathsf{C}_{t+1}}$ exceptionally since register $\mathsf{C}_0$ does not exist. Similarly we ignore register $\mathsf{C}_0$ in (e).

The unitary operator $U'_t$ in Eq.(\ref{lh}) for (a), (b), (b') is defined as follows;
\begin{itemize}
\item[(a)](verifier's gate step) $t\in[1,t_0]\cup[t_0+2pl+p+1,T]$: $U'_t=U_t$.
\item[(b)](communication step) $t\in[t_0+1,t_0+pl]$ : Let $t=t_0+(k-1)l+j$ ($1\le k\le p,1\le j\le l$). Then $U'_t:=SWAP _{(k,j),\mathsf{Me}_k}$.
\item[(b')](communication step) $t\in[t_0+pl+p+1,t_0+2pl+p]$: Let $t=t_0+pl+p+(k-1)l+j$. Then $U'_t:=SWAP _{(k,j),\mathsf{Me}_k}$.
\end{itemize}
In (c), (d), (e), (f), we define $H_t$ as follows;
\begin{itemize}
\item[(c)](provers' step) $t=t_0+pl+k\ (k=1,...,p)$:\\
\quad$H_{t_0+pl+k}:= |10\rangle\langle10|_{\mathsf{C}_{t_0+pl+k-1},\mathsf{C}_{t_0+pl+k+1}}\otimes|-\rangle\langle-|_{\mathsf{C}_{t_0+pl+k}} CNOT_{\mathsf{C}_{t_0+pl+k},\mathsf{Me}_k}$.
\item[(d)](measurement by the verifier) $t=T+1$: $H_{T+2pl+p+1}:=|1\rangle\langle1|_{\mathsf{C}_T}\otimes({\rm Id}-\Pi_{acc})$.
\item[(e)](consistency of time counters) $t\in[T+2,2T]$: $H_{t}:=|01\rangle\langle01|_{\mathsf{C}_{t-T-1},\mathsf{C}_{t-T}}$. 
\item[(f)](initialization of $\mathsf{V}$) $t\in [2T+1,2T+n]$: $H_t:=|0\rangle\langle0|_{\mathsf{C}_1}\otimes(\mathrm{Id}-\Pi_0^{t-2T})$. Here, $\Pi_0^{t-2T}$ is a projection onto $|0\rangle$ of the ($t-2T$)th qubit of $\mathsf{V}$.
\end{itemize}
\hrulefill
\caption{Local Hamiltonian based Interactive Protocol for $\cal{P}$ (in fact, for the $\mathrm{QMIP^*}$ protocol in Figure \ref{1com}). (A) is the protocol. (B) shows the Hamiltonians used in the protocol given in (A). }
\label{LHp}
\end{figure}
\begin{proof}
 We construct the $\mathrm{QMIP^*}$ protocol for $\cal{P}$ to compute the original three-turn protocol with acceptance probability at least $1-O(\sqrt[4]{T^5\epsilon})$  based on the protocol in Figure \ref{LHp} with acceptance probability at least $1-\epsilon$. Intuitively, (a),...,(f) in Figure \ref{LHp} check the following; 
 \begin{description}
 \item[(a)]the verifier's gate step
 \item[(b) and (b')]communication step
 \item[(c)]the provers' step
 \item[(d)]the measurement by the verifier
 \item[(e)]the consistency of the time counters
 \item[(f)]the initialization of $\mathsf V$
\end{description}
  Let $|\varphi\rangle$ be the initial state of all the registers in the LHI protocol. Let ${D_{k,j}}$ ($D_{k,j}'$, respectively)  be the operation that prover $k$ does in test (b) with query $t=t_0+(k-1)l+j$ ((b')  with $t=t_0+pl+p+(k-1)l+j$, respectively).  Let $E_k$ be the operation that prover $k$ does in test (c). Denote the Pauli $X$ operator on $\mathsf{Me}_k$ by $X_{\mathsf{Me}_k}$. Denote the projection onto $|0^n\rangle$ of register $\mathsf{V}$ by $\Pi_{0^n}$. 
  
Before we construct the provers' strategy for $\cal{P}$, we define some unitary operators. \sloppy Let $|t\rangle=|1\rangle_{\mathsf{C_1}}|1\rangle_{\mathsf{C_2}}\cdots|1\rangle_{\mathsf{C}_t}|0\rangle_{\mathsf{C_{t+1}}}\cdots|0\rangle_{\mathsf{C}_T}$ be a state in counter registers that corresponds to time $t$. Let
\begin{equation}
\begin{split}
&U_{t_0+(k-1)l+j}:=D_{k,j}^\dagger SWAP _{(k,j),\mathsf{Me}_k}D_{k,j}\ (k\in[1,p],j\in[1,l]),\\
&U_{t_0+pl+p+(k-1)l+j}:={D'_{k,j}}^\dagger SWAP _{(k,j),\mathsf{Me}_k}D'_{k,j}\ (k\in[1,p],j\in[1,l]),\\
&U_{t_0+pl+k}:= {E_k}^\dagger X_{\mathsf{Me}_k}E_k\ (k\in[1,p]),
\end{split}
\end{equation}
and
\begin{equation}
\label{opr}
\begin{split}
&W:=\Sigma_{t\in [0,T]} |t\rangle\langle t| \otimes U_1^\dagger\cdots U_t^\dagger+\Pi_{bad time}\otimes \rm{Id},\\
&A_k:=U_{t_0+(k-1)l+l}\cdots U_{t_0+(k-1)l+1},\\
&B_k:={U}_{t_0+pl+k},\\
&A'_k:={U_{t_0+pl+p+(k-1)l+1}}\cdots {U_{t_0+pl+p+(k-1)l+l}}.
\end{split}
\end{equation}
where $|t\rangle\langle t| \otimes U_1^\dagger\cdots U_t^\dagger$ for $t=0$ is $|0\rangle\langle0|\otimes\rm Id$, and $\Pi_{bad time}$ is a projection onto states rejected by test (e). Intuitively, $A_k,B_k,A'_k$ correspond to A,B,A' in Figure \ref{1com}. Note that $U_t$ for $t\notin [t_0+1,t_0+2pl+p]$ are already defined by the verifier's gates; $V_0=U_{t_0}\cdots U_1$ and $V_1=U_T\cdots U_{t_0+2pl+p+1}$. 

Using these operators, we define the strategy. The provers' strategy $(|\overline{\varphi}\rangle,\{P_k\})$ for $\cal{P}$ constructed from the LHI protocol is as follows. (The strategy $(|\overline{\varphi}\rangle,\{P_k\})$ means that the state shared initially by the provers is $|\overline{\varphi}\rangle$, at the first turn prover $k$ sends the substate in $\mathsf{M}_k$ of $|\overline{\varphi}\rangle$ and at the third turn prover $k$  applies $P_k$ on $(\mathsf{P}_k,\mathsf{M}_k,\mathsf{Me}_k)$ and sends back $\mathsf{M}_k$.)
\begin{equation}
\begin{split}
&|\overline{\varphi}\rangle:=\frac{\Pi_{0^n} W |\varphi\rangle}{\|\Pi_{0^n} W |\varphi\rangle\|},\\
&P_k:=A'_kB_kA_k.\\
\end{split}
\label{philine}
\end{equation}

Next, we calculate the acceptance probability of the strategy $(|\overline{\varphi}\rangle,\{P_k\})$.
Now we take $|\varphi\rangle$ as follows;
\begin{equation}
|\varphi\rangle=\Sigma_{t\in[0,T]} |t\rangle|\psi_t\rangle+|\bot\rangle
\label{varphi}
\end{equation}
where $|\psi_t\rangle$ is an (unnormalized) vector, and $|\bot\rangle$ is a state that is rejected by any Hamiltonian of (e)(consistency of the time counters) in Figure 5.
$\||\bot\rangle\|$ is bounded by the following inequality.
\begin{equation}
\begin{split}
\ \||\bot\rangle\|
&=\bigl\|\sum_{t\in[T+2,2T]}|01\rangle\langle01|_{\mathsf{C}_{t-T-1},\mathsf{C}_{t-T}}|\bot\rangle\bigr\|\\
&\le\sum_{t\in[T+2,2T]}\||01\rangle\langle01|_{{\mathsf{C}_{t-T-1},\mathsf{C}_{t-T}}}|\bot\rangle\|\\
&=\sum_{t\in[T+2,2T]}\||01\rangle\langle01|_{{\mathsf{C}_{t-T-1},\mathsf{C}_{t-T}}}|\varphi\rangle\|\\
&\le(T-1)\sqrt{(2T+n)\epsilon}\\
&=O(\sqrt{T^3\epsilon})
\end{split}
\label{boteval}
\end{equation}
The first equality follows since $|\bot\rangle$  is an incorrect time register state. The first inequality is the triangle inequality. The second equality follows by the definition of $|\varphi\rangle$ in (\ref{varphi}). The third inequality follows by the probability of rejection by (e). The last equality follows by the assumption $n\le T$ in Figure \ref{LHp}(A). 

We bound the following terms (\ref{eq;vch}), (\ref{eq;com1}), (\ref{eq;com2}), (\ref{eq;pr}), using the assumption that the error probability of the LHI protocol is $\epsilon$ or less.
%\begin{equation}
\begin{align}
&\||\psi_t\rangle-U_t|\psi_{t-1}\rangle\|\ \ (t\in[1,t_0]\cup[t_0+2pl+p+1,T]), \label{eq;vch}\\
%\end{equation}
%\begin{equation}
%\begin{split}
&\||\psi_t\rangle-D_{k,j}^\dagger SWAP _{(k,j),\mathsf{Me}_k} D_{k,j}|\psi_{t-1}\rangle\|\ (t\in[t_0+1,t_0+pl],\ t=t_0+(k-1)l+j), \label{eq;com1} \\
&\||\psi_t\rangle-{D'_{k,j}}^\dagger SWAP _{(k,j),\mathsf{Me}_k} D'_{k,j}|\psi_{t-1}\rangle\|\ (t\in[t_0+pl+p+1,t_0+2pl+p],\ t=t_0+pl+p+(k-1)l+j),\label{eq;com2}\\
&\||\psi_t\rangle-E_{k}^\dagger X_{\mathsf{Me}_k} E_{k}|\psi_{t-1}\rangle\|\ \ (t=t_0+pl+k).\label{eq;pr}
\end{align}
We can bound (\ref{eq;vch}) as follows;
\begin{equation}
\label{v-uni}
\begin{split}
\||\psi_{t}\rangle-U_t|\psi_{t-1}\rangle\|
&=\frac{1}{2}(\||\psi_{t}\rangle-U_t|\psi_{t-1}\rangle\|+\||\psi_{t-1}\rangle-U_t^\dagger|\psi_{t}\rangle\|)\\
&\le\frac{\sqrt2}{2}\sqrt{\||\psi_{t}\rangle-U_t|\psi_{t-1}\rangle\|^2+\||\psi_{t-1}\rangle-U_t^\dagger|\psi_{t}\rangle\|^2}\\
&=\frac{\sqrt2}{2}\||t\rangle(|\psi_t\rangle-U_t|\psi_{t-1}\rangle)+|t-1\rangle(|\psi_{t-1}\rangle-U_t^\dagger|\psi_t\rangle)\|\\
&=\frac{\sqrt2}{2}\|H_t|\varphi\rangle\|=O(\sqrt{T\epsilon}).
\end{split}
\end{equation}
The first equality follows from modifying $2\||\psi_{t}\rangle-U_t|\psi_{t-1}\rangle)\|$ to $\||\psi_{t}\rangle-U_t|\psi_{t-1}\rangle)\|$$+\||\psi_{t}\rangle-U_t|\psi_{t-1}\rangle)\|$ and applying $U_t^\dagger$ to the second term. The first inequality is the Cauchy-Schwarz inequality. The second equality follows since $|t\rangle$ and $|t-1\rangle$ are orthogonal vectors. The third equality follows from the definitions of $H_t$ in Figure \ref{LHp}. The final equality follows since the whole rejection probability is at most $\epsilon$ and  the verifier selects $H_t$ with probability $\Omega(1/T)$.

We can bound (\ref{eq;com1}) as follows. (\ref{eq;com2}) is bounded similarly;
\begin{equation}
\label{com-uni}
\begin{split}
&\||\psi_{t}\rangle-D_{k,j}^\dagger SWAP _{(k,j),\mathsf{Me}_k} D_{k,j}|\psi_{t-1}\rangle\|\\
&=\frac{1}{2}(\|D_{k,j}|\psi_t\rangle-SWAP _{(k,j),\mathsf{Me}_k} D_{k,j}|\psi_{t-1}\rangle\|+\|SWAP _{(k,j),\mathsf{Me}_k}D_{k,j}|\psi_{t}\rangle-D_{k,j}|\psi_{t-1}\rangle\|)\\
&\le\frac{\sqrt2}{2}\sqrt{\begin{split}&\|D_{k,j}|\psi_t\rangle-SWAP _{(k,j),\mathsf{Me}_k} D_{k,j}|\psi_{t-1}\rangle\|^2+\|SWAP _{(k,j),\mathsf{Me}_k}D_{k,j}|\psi_{t}\rangle-D_{k,j}|\psi_{t-1}\rangle\|^2\end{split}}\\
&=\frac{\sqrt2}{2}\||t\rangle(D_{k,j}|\psi_t\rangle-SWAP _{(k,j),\mathsf{Me}_k} D_{k,j}|\psi_{t-1}\rangle)+|t-1\rangle(D_{k,j}|\psi_{t-1}\rangle-SWAP _{(k,j),\mathsf{Me}_k}D_{k,j}|\psi_t\rangle)\|\\
&=\frac{\sqrt2}{2}\|| H_tD_{k,j}|\varphi\rangle\|=O(\sqrt{T\epsilon}).
\end{split}
\end{equation}
The first equality follows from modifying $\||\psi_{t}\rangle-U_t|\psi_{t-1}\rangle)\|$ to $\frac{1}{2}(\||\psi_{t}\rangle-U_t|\psi_{t-1}\rangle)\|+\||\psi_{t}\rangle-U_t|\psi_{t-1}\rangle)\|)$ and applying $D_{k,j}$ to the first term and $SWAP _{(k,j),\mathsf{Me}_k}D_{k,j}$ to the second term. The first inequality is the Cauchy-Schwarz, the second equality follows since $|t\rangle$ and $|t-1\rangle$ are orthogonal vectors. The third equality follows from the definition of  $H_t$ of (b). The final equality follows since the whole rejection probability is $\epsilon$ or less and  the verifier selects $H_t$ with probability $\Omega(1/T)$.

We can bound (\ref{eq;pr}) as follows;
\begin{equation}
\label{pr-uni}
\begin{split}
\||\psi_t\rangle-E_k^\dagger X_{\mathsf{Me}_k}E_k|\psi_{t-1}\rangle\|
&=\sqrt{2}\| |-\rangle\langle-|\bigl(|0\rangle_{\mathsf{C}_t} (|\psi_t\rangle-E_k^\dagger X_{\mathsf{Me}_k}E_k|\psi_{t-1}\rangle)\bigr)\|\\
&=\sqrt{2}\||-\rangle\langle-|_{\mathsf{C}_t}\bigl(\sqrt2 |+\rangle_{\mathsf C_t}X_{\mathsf{Me}_k}E_k|\psi_t\rangle -|0\rangle_{\mathsf C_t} (X_{\mathsf{Me}_k}E_k|\psi_{t}\rangle-E_k|\psi_{t-1}\rangle)\bigr)\|\\
&=\sqrt{2}\||-\rangle\langle-|_{\mathsf{C}_t}(|1\rangle_{\mathsf C_t}X_{\mathsf{Me}_k}E_k|\psi_t\rangle+|0\rangle_{\mathsf C_t} E_k|\psi_{t-1}\rangle)\|\\
&=\sqrt{2}\||-\rangle\langle-|_{\mathsf{C}_t} (|t\rangle X_{\mathsf{Me}_k}E_k|\psi_{t}\rangle+|t-1\rangle E_k|\psi_{t-1}\rangle)\|\\
&=\sqrt{2}\||-\rangle\langle-|_{\mathsf{C}_t} \bigl((CNOT_{\mathsf{C}_t,\mathsf{Me}_k}(|t\rangle(E_k|\psi_{t}\rangle+|t-1\rangle E_k|\psi_{t-1}\rangle)\bigr)\|\\
&=\sqrt{2}\|H_tE_k|\varphi\rangle\|=O(\sqrt{T\epsilon}).
%=||\psi_{t+1}\rangle-U_t|\psi_t\rangle|^2+||\psi_t\rangle-U_t^\dagger|\psi_{t+1}\rangle|^2=2||\psi_{t+1}\rangle-U_t|\psi_t\rangle)||^2
\end{split}
\end{equation}
The first equality follows from adding register $\mathsf{C}_t$. The second equality follows from multiplying $|0\rangle_{\mathsf{C}_t} (|\psi_t\rangle-E_k^\dagger X_{\mathsf{Me}_k}E_k|\psi_{t-1}\rangle)$ by  $-1$, applying $X_{\mathsf{Me}_k}E_k$ and $\langle-|+\rangle=0$. The third equality follows from $|+\rangle=\frac{|0\rangle+|1\rangle}{\sqrt2}$. The forth equality follows from $|t-1\rangle=|1\cdots1\rangle_{\mathsf{C}_1,\cdots,\mathsf{C}_{t-1}}|0\cdots 0\rangle_{\mathsf{C}_t,\cdots ,\mathsf{C}_{T+2pl+p}}$ and $|t\rangle=|1\cdots 1\rangle_{\mathsf{C}_1,\cdots,\mathsf{C}_t}|0\cdots 0\rangle_{\mathsf{C}_{t+1},\cdots,\mathsf{C}_{T+2pl+p}}$. The fifth equality follows from that $CNOT_{\mathsf{C}_t,\mathsf{Me}_k}$ acts as $X_{\mathsf{Me}_k}$ if and only if register $\mathsf{C}_t$ is $|1\rangle$. The last equality is the definition of $H_t$ of (c) and $E_k$. 

The following equations evaluate the norm of $\Pi_{0^n} W |\varphi\rangle$.
\begin{equation}
\label{norm}
\begin{split}
&\!\!\!\! \|\Pi_{0^n}W|\varphi\rangle\|\\
&\ge1-\|({\rm Id}-\Pi_{0^n})W|\varphi\rangle\|\\
&=1-\|({\rm Id}-\Pi_{0^n})(\sum_{t=0}^T |t\rangle\langle t| \otimes U_1^\dagger\cdots U_t^\dagger+\Pi_{bad time}\otimes \mathrm{Id})(\sum_{t=0}^T|t\rangle|\psi_t\rangle+|\bot\rangle)\|\\
&=1-\sum_{t=0}^T\Bigl(\|({\rm Id}-\Pi_{0^n})|t\rangle\bigl(|\psi_0\rangle+\sum_{t'=1}^t (U_1^\dagger\cdots U_{t'}^\dagger|\psi_{t'}\rangle-{U_1}^\dagger\cdots U_{t'-1}^\dagger|\psi_{t'-1}\rangle)\bigr)\|\Bigr)-O(\sqrt{T^3\epsilon})\\
&\ge1-\sum_{t=0}^T\Bigl(\|({\rm Id}-\Pi_{0^n})|\psi_0\rangle\|+\sum_{t'=1}^t \|({\rm Id}-\Pi_{0^n})(U_1^\dagger\cdots U_{t'}^\dagger|\psi_{t'}\rangle-{U_1}^\dagger\cdots U_{t'-1}^\dagger|\psi_{t'-1}\rangle\bigr)\|\Bigr)-O(\sqrt{T^3\epsilon}).\\
%&=1-O(nT\sqrt\epsilon)-O(\sum_{t=1}^Tt\sqrt{T\epsilon})-O(\sqrt{T\epsilon})\\
%&=1-O(\sqrt{T^5\epsilon})
\end{split}
\end{equation}
Here, we ignore the term $\sum_{t'=1}^t$ for $t=0$ for the ease of notation. The first inequality follows from the triangle inequality. The first equality follows from the definition of $W$ in (\ref{opr}) and of $|\varphi\rangle$ in (\ref{varphi}). The second equality follows from $U_1^\dagger\cdots U_t^\dagger|\psi_t\rangle=|\psi_0\rangle +\Sigma_{t'=1}^t(U_1^\dagger\cdots U_{t'}^\dagger|\psi_{t'}\rangle-{U_1}^\dagger\cdots U_{t'-1}^\dagger|\psi_{t'-1}\rangle)$ and $\||\bot\rangle\|=O(\sqrt{T^3\epsilon})$ by (\ref{boteval}). The second inequality is the triangle inequality. 

The term $\|(\mathrm{Id}-\Pi_{0^n})|\psi_0\rangle\|$ in the last line of (\ref{norm}) is bounded by test (f) in Figure \ref{LHp} as follows;
\begin{equation}
\label{term1}
\|(\mathrm{Id}-\Pi_{0^n})|\psi_0\rangle\|\le\sum_{t\in[2T+1,2T+n]}\|(\mathrm{Id}-\Pi_0^{t-2T})|\psi_0\rangle\| \le O(n\sqrt{T\epsilon}).
\end{equation}
The second inequality  follows since each term in the summation of (\ref{term1}) is the rejection probability of each $t$ in test (f) in Figure \ref{LHp}.
 
The terms $\|(\mathrm{Id}-\Pi_{0^n})(U_T\cdots U_{t'+1}|\psi_{t'}\rangle-U_T\cdots U_{t'}|\psi_{t'-1}\rangle)\|$ in the same line of (\ref{norm}) are bounded as follows; 
\begin{equation}
\label{term2}
\|(\mathrm{Id}-\Pi_{0^n})(U_T\cdots U_{t'+1}|\psi_{t'}\rangle-U_T\cdots U_{t'}|\psi_{t'-1}\rangle)\| \le \||\psi_{t'}\rangle-U_{t'}|\psi_{t'-1}\rangle\|\le O(\sqrt{T\epsilon}).
\end{equation}
The first inequality follows by omitting the projection $(\mathrm{Id}-\Pi_{0^n})$ and applying ${U_{t'+1}}^\dagger\cdots {U_T}^\dagger$. The second inequality follows since the term $\||\psi_{t'}\rangle-U_{t'}|\psi_{t'-1}\rangle\|$ equals to one of (\ref{eq;vch},\ref{eq;com1},\ref{eq;com2},\ref{eq;pr}), and (\ref{eq;vch},\ref{eq;com1},\ref{eq;com2},\ref{eq;pr}) are bounded by (\ref{v-uni},\ref{com-uni},\ref{pr-uni}).
%The third equality follows from $\|(\mathrm{Id}-\Pi_{0^n})(U_T\cdots U_{t'+1}|\psi_{t'}\rangle-\Pi_{rej}U_T\cdots U_{t'}|\psi_{t'-1}\rangle)\| \le \||\psi_{t'}\rangle-U_{t'}|\psi_{t'-1}\rangle\|$, bounding this RHS by (\ref{v-uni},\ref{com-uni},\ref{pr-uni}) which bound (\ref{eq;vch},\ref{eq;com1},\ref{eq;com2},\ref{eq;pr}), $\mathrm{Id}-\Pi_{0^n}\le\sum_{t\in[2T+2,2T+n+1]}(\mathrm{Id}-\Pi_0^{t-2T-1})$ and $\|(\mathrm{Id}-\Pi_0^j)|\psi_0\rangle\|^2\le\epsilon$ by the test (f) in Figure \ref{LHp}. The final equality follows from the assumption $n\le T$ and absorbing $O(\sqrt{T\epsilon})$ and $O(nT\epsilon)$ into $O(\sqrt{T^5\epsilon})$.

From (\ref{norm},\ref{term1},\ref{term2}) and the assumption $n\le T$, we have the following estimation;
\begin{equation}
\|\Pi_{0^n}W|\varphi\rangle\|=1-O(nT\sqrt{T\epsilon})-O(T^2\sqrt{T\epsilon})=1-O(\sqrt{T^5\epsilon}).
\label{linephi}
\end{equation}
From this, the next estimation follows; 
\begin{equation}
\label{norm2}
\|(\mathrm{Id}-\Pi_{0^n}) W |\varphi\rangle\|^2=1-\|\Pi_{0^n}W|\varphi\rangle\|^2=O(\sqrt{T^5\epsilon}).
\end{equation}

Finally, we bound the rejection probability $p_{rej}$.  Let $\Pi_{rej}$ be the projection onto the rejection of $\cal{P}$. Then,
\begin{equation}
\label{rej}
\begin{split}
\sqrt{p_{rej}}&=\|\Pi_{rej}U_{T}\cdots U_1|\overline{\varphi}\rangle\|\\
&=(1+O(\sqrt{T^5\epsilon}))\|\Pi_{rej}U_T\cdots U_1\Pi_{0^n}W|\varphi\rangle\|\\
&\le(1+O(\sqrt{T^5\epsilon}))(\|\Pi_{rej}U_T\cdots U_1W|\varphi\rangle\|+\|\Pi_{rej}U_T\cdots U_1({\rm Id}-\Pi_{0^n})W|\varphi\rangle\|)\\
&\le(1+O(\sqrt{T^5\epsilon}))\bigl(\|\Pi_{rej}U_T\cdots U_1W|\varphi\rangle\|+O(\sqrt[4]{T^5\epsilon})\bigr)\\
&=\|\Pi_{rej}\sum_{t=0}^T U_T\cdots U_{t+1}|\psi_t\rangle\|+O(\sqrt{T^3\epsilon})+O(\sqrt[4]{T^5\epsilon})\\
&\le\sum_{t=0}^T\Bigl(\|\Pi_{rej}|\psi_T\rangle\|+\sum_{t'=t+1}^T \|\Pi_{rej}U_T\cdots U_{t'+1}|\psi_{t'}\rangle-\Pi_{rej}U_T\cdots U_{t'}|\psi_{t'-1}\rangle\|\Bigr)+O(\sqrt[4]{T^5\epsilon}).\\
%&\le\sum_{t=0}^T(T-t+1)\sqrt{2T\epsilon}+O(\sqrt[4]{T^5\epsilon})\\
%&=O(\sqrt{T^5\epsilon})+O(\sqrt[4]{T^5\epsilon})=O(\sqrt[4]{T^5\epsilon}).
\end{split}
\end{equation}
Here, we ignore the term $\sum_{t'=t+1}^T$ for $t=T$ for the ease of notation.  The second equality follows from (\ref{linephi}) and the definition of $|\overline\varphi\rangle$ in (\ref{philine}). The first inequality follows from the triangle inequality. The second inequality follows from (\ref{norm2}). The third equality follows from the definition of $W$ in (\ref{opr}), the definition of $|\varphi\rangle$ in (\ref{varphi}), and $\||\bot\rangle\|=O(\sqrt{T^3\epsilon})$  by (\ref{boteval}). The last inequality follows from the triangle inequality. 

The term $\|\Pi_{rej}|\psi_T\rangle\|$ in the last line of (\ref{rej}) is bounded by test (d) in Figure \ref{LHp} as follows; 
\begin{equation}
\label{term3}
\|\Pi_{rej}|\psi_T\rangle\|\le O(\sqrt{T\epsilon}).
\end{equation}

The terms $\|\Pi_{rej}U_T\cdots U_{t'+1}|\psi_{t'}\rangle-\Pi_{rej}U_T\cdots U_{t'}|\psi_{t'-1}\rangle\|$ in the same line can be bounded by (\ref{v-uni},\ref{com-uni},\ref{pr-uni}) as follows; 
\begin{equation}
\label{term4}
\|\Pi_{rej}U_T\cdots U_{t'+1}|\psi_{t'}\rangle-\Pi_{rej}U_T\cdots U_{t'}|\psi_{t'-1}\rangle\| \le \||\psi_{t'}\rangle-U_{t'}|\psi_{t'-1}\rangle\|\le O(\sqrt{T\epsilon}).
\end{equation} %and bounding this RHS by (\ref{v-uni},\ref{com-uni},\ref{pr-uni}). 
From (\ref{rej},\ref{term3},\ref{term4}), $\sqrt{p_{rej}}\le O(\sqrt[4]{T^5\epsilon})$ follows.
\end{proof}

\subsection{Addition of the GHZ test}
Second, we construct the LHI+ protocol: we add the GHZ test which is an analogue of the GHZ test of Theorem 5 to the LHI protocol, which is described in Figure \ref{int}. The honest verifier of the LHI protocol sends all provers (except prover 0) the same bits, but the malicious verifier may send different bits. To prevent this attack, the provers control which term $H_t$ is measured by the verifier in Figure \ref{LHp}(A). To avoid also that the provers are malicious, we use the GHZ test, which guarantees that the provers really chooses $H_t$ uniformly at random. The reason why we do not use the coin-flipping protocol to decide $t$ is that we do not know any multi-parity coin-flipping protocol  among the provers and the verifier. For example, if we use the two-party coin-flipping protocol between each of the provers and the verifier, the malicious verifier could choose different $t$ depending on the provers. The analysis of the LHI+ protocol is almost the same as the proof of Theorem 4.

\begin{lemm}If the completeness/soundness  of the protocol in Figure \ref{LHp} are  $1-\epsilon/1-\delta$, then the completeness/soundness of the LHI+ protocol in Figure \ref{int} are $1-\frac{\epsilon}{2}$/$1-\frac{\delta^2}{8}$.
\end{lemm}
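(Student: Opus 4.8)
The plan is to mirror the proof of Theorem~4 almost verbatim, substituting the LHI protocol for the two-turn protocol of Figure~\ref{2turn} and the query $t'\in[2T+n]$ for the coin $b$. First I would set up the LHI+ protocol exactly as in Figure~\ref{int}: the verifier flips a bit $b''\in\{0,1\}$, sends it to provers $1,\ldots,p$ (nothing to prover~$0$), and on $b''=0$ runs a GHZ test on single-qubit registers $\mathsf{G}_0,\ldots,\mathsf{G}_p$ (projection onto $\tfrac{1}{\sqrt2}(|0^{p+1}\rangle+|1^{p+1}\rangle)$), while on $b''=1$ it measures $\mathsf{G}_0$ in the computational basis to obtain a value, say $t'\in[2T+n]$ encoded appropriately (or rather $\mathsf{G}_0$ here carries the query $t'$; one must allow $\mathsf{G}_0,\ldots,\mathsf{G}_p$ to have $\lceil\log(2T+n)\rceil$ qubits so that the GHZ state is over a register large enough to hold a query, which is a harmless change), then runs step~3 of Figure~\ref{LHp}(A) with that $t'$.

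For completeness: if the provers of the LHI protocol accept with probability $1-\epsilon$, the honest LHI+ provers prepare the generalized GHZ state $\tfrac{1}{\sqrt{2T+n}}\sum_{t'}|t'\rangle^{\otimes(p+1)}$ in $\mathsf{G}_0,\ldots,\mathsf{G}_p$, pass the GHZ test with certainty, and on the query branch the measurement of $\mathsf{G}_0$ yields a uniformly random $t'$ and the residual $\mathsf{G}_i$'s all collapse to $|t'\rangle$, so the rest is exactly the LHI protocol; hence acceptance is $\tfrac12\cdot1+\tfrac12(1-\epsilon)=1-\tfrac{\epsilon}{2}$.

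For soundness I would replay the displayed chain of inequalities from the proof of Theorem~4. Given LHI+ provers accepting with probability $1-p_{rej}$, let $U_g^i$ and $U_h^i$ be what prover $i$ does in the GHZ branch and the query branch respectively, let $\rho$ be the shared initial state, set $\rho''=\mathrm{Tr}_{GHZ}(\Pi_{GHZ}(\otimes_i U_g^i)\rho(\otimes_i (U_g^i)^\dagger)\Pi_{GHZ})$ with $\mathrm{Tr}\,\rho''=1-\epsilon_1$ (the GHZ-test rejection), $\rho'=\rho''/(1-\epsilon_1)$, and let $\epsilon_2$ be the query-branch rejection. Using that the $(p{+}1)$-qudit GHZ state restricted to $p$ qudits is $\tfrac{1}{2T+n}\sum_{t'}|t'\rangle\langle t'|^{\otimes p}$, I get an LHI strategy where prover $i$ writes the received $t'$ into $\mathsf{G}_i$ and applies $U_h^i (U_g^i)^\dagger$; its rejection $p_{rej,hv}$ satisfies the same bound $p_{rej,hv}\le\sqrt{\epsilon_1}+\epsilon_2\le\sqrt{\epsilon_1}+\sqrt{\epsilon_2}\le 2\sqrt{2p_{rej}}$, invoking Lemma~\ref{2PO} on the POVM $\{\Pi_{GHZ},\mathrm{Id}-\Pi_{GHZ}\}$ to get $\|\rho'-\rho\|\le\sqrt{\epsilon_1}$. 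So if the LHI protocol has soundness $1-\delta$, i.e.\ $p_{rej,hv}\ge\delta$ for $x\notin L$, then $2\sqrt{2p_{rej}}\ge\delta$, giving $p_{rej}\ge\delta^2/8$, i.e.\ LHI+ soundness $1-\delta^2/8$. The main obstacle is bookkeeping rather than conceptual: one must check that enlarging $\mathsf{G}_i$ from one qubit to $\lceil\log(2T+n)\rceil$ qubits leaves the GHZ-test projection and the "restricted GHZ state $=$ uniform mixture of $|t'\rangle^{\otimes p}$" identity intact, and that the honest-verifier LHI protocol indeed receives exactly the state the reduced strategy produces (so that the soundness guarantee of Lemma~6 applies to it); both are routine once the register sizes are fixed consistently with Figure~\ref{LHp}(A).
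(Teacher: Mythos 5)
Your proposal is correct and follows essentially the same route as the paper: both reduce an LHI+ strategy to an LHI strategy via the post-GHZ-test state $\rho'$, invoke Lemma~1 on the POVM $\{\Pi_{GHZ},\mathrm{Id}-\Pi_{GHZ}\}$ to get $\|\rho'-\rho\|\le\sqrt{\epsilon_1}$, and chain the same inequalities to obtain $p_{rej,lh}\le\sqrt{\epsilon_1}+\epsilon_2\le 2\sqrt{2p_{rej}}$. Your ``generalized GHZ state'' over a $(2T+n)$-ary alphabet is the same vector as the paper's $u$-fold tensor power $(\frac{1}{\sqrt2}(|0^{p+1}\rangle+|1^{p+1}\rangle))^{\otimes u}$ with $2^u=2T+n$ (which Figure~6 already fixes), so the register-size bookkeeping you flag is already in place.
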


\begin{proof}
The completeness is obvious. We show the soundness. The analysis is almost the same as the proof of Theorem 5. The only differences are the definition of $\Pi_{GHZ}$ and $\Pi_{rej}$. Let $U_g^i$ be the unitary operator that prover $i$ uses for the GHZ test and $U_h^i$ be the unitary operator that prover $i$ uses for the history test for $i=1,...,p$.

Let $\rho'':=\rm{Tr}_{GHZ}$$(\Pi_{GHZ}(\otimes_{i=1}^p U_g^i)\rho(\otimes_{i=1}^p {U_g^i}^\dagger)\Pi_{GHZ})$. Here, $\rho$ is the initial state of the provers in the LHI+ protocol, $\Pi_{GHZ}$ is the projection onto $(\frac{|0^{p+1}\rangle+|1^{p+1}\rangle}{\sqrt2})^{\otimes u}$ of ($\mathsf{G}_0, \mathsf{G}_1, \mathsf{G}_2,...,\mathsf{G}_p$), and $\rm Tr_{GHZ}$ means the traceout of registers ($\mathsf{G}_0, \mathsf{G}_1, \mathsf{G}_2,...,\mathsf{G}_p$). Denote $\rm Tr$$\rho''=1-\epsilon_1$, where $\epsilon_1$ is the rejection probability of the GHZ test. Let $\frac{\rho''}{1-\epsilon_1}:=\rho'$ be the initial state of provers. Let $\epsilon_2$ be the rejection probability of the history test.

Now we construct the strategy of  the LHI protocol from the strategy of the LHI+ protocol. The private registers of prover $i$ are $\mathsf{P}_i,\mathsf{G}_i$, for $i=1,...,p$. Initially prover 0 has only register $\mathsf{V}$. The initial state in ($\mathsf{P}_1,...,\mathsf{P}_p,\mathsf{M}_1,...,\mathsf{M}_p$) is $\rho'$. If the verifier sends $b\in\{0,1\}^u$ to prover $i$ ($i=1,...,p$), prover $i$ sets $|b\rangle$ in register $\mathsf{G}_i$, applies $U^i_h{U^i_g}^\dagger$ and sends $\mathsf{M}_i$ to the verifier. Prover 0 always sends $\mathsf{V}$. The operations by provers are as follows;

\begin{equation}\rho'\rightarrow(\otimes_{i=1}^p U_h^i{U_g^i}^\dagger) ((|b\rangle\langle b|)^{\otimes p}\otimes \rho') (\otimes_{i=1}^p U_g^i{U_h^i}^\dagger).
\end{equation}
The rejection probability of the LHI protocol $p_{rej,lh}$ is bounded by the rejection probability $p_{rej}$  of the LHI+ protocol as follows;
\begin{equation}
\begin{split}
p_{rej,lh}&:=\frac{1}{2^u}\sum_{b\in\{0,1\}^u}\mathrm{Tr}[\Pi_{rej}(\otimes_{i=1}^p U_h^i {U_g^i}^\dagger) ((|b\rangle\langle b|)^{\otimes p}\otimes \rho') (\otimes_{i=1}^p U_g^i{U_h^i}^\dagger)]\\
&=\frac{1}{1-\epsilon_1}\mathrm{Tr} [\Pi_{rej}(\otimes_{i=1}^p  U_h^i{U_g^i}^\dagger)\Pi_{GHZ}(\otimes_{i=1}^p U_g^i)\rho (\otimes_{i=1}^p {U_g^i}^\dagger)\Pi_{GHZ}(\otimes_{i=1}^p U_g^i{U_h^i}^\dagger)]\\
&\le \sqrt\epsilon_1+\mathrm{Tr} [\Pi_{rej}(\otimes_{i=1}^p  U_h^i{U_g^i}^\dagger)(\otimes_{i=1}^p U_g^i)\rho (\otimes_{i=1}^p {U_g^i}^\dagger)(\otimes_{i=1}^p {U_g^i}{U_h^i}^\dagger)]\\
&= \sqrt\epsilon_1+\mathrm{Tr}[\Pi_{rej}(\otimes_{i=1}^p U_h^i)\rho(\otimes_{i=1}^p {U_h^i}^\dagger)]=\sqrt\epsilon_1+\epsilon_2\le\sqrt\epsilon_1+\sqrt\epsilon_2\le2\sqrt {2p_{rej}}.
\end{split}
\end{equation}

The first equality follows from that $\frac{1}{2}(|0^p\rangle\langle0^p|+|1^p\rangle\langle1^p|)$ equals to the $p$-qubits substate of the GHZ state. The first inequality follows from that $\|\rho'-\rho\|\le\sqrt\epsilon_1$ by Lemma \ref{2PO}.

\end{proof}
\begin{figure}
\hrulefill\\
LHI+ protocol.\\\\
We can assume $2^u=n+2T$ for some integer $u$ without loss of generality. Let $\mathsf{G}_i$ $(i=0,...,p)$ be registers each of which consists of $u$ qubits.
\begin{enumerate}
\item
Select $b\in\{0,1\}$ uniformly at random. Send $b$ to provers $1,...,p$ and nothing to prover 0. 
\item
Do the following test depending on $b=0,1$;
\begin{enumerate}
\item[(1)]
($b=0$: GHZ test)\\
(1.1)  Prover $i$ ($i=1,...,p$) sends $\mathsf{G}_i$. Prover 0 sends $(\mathsf{V}, \mathsf{M}_1,..., \mathsf{M}_p$, $\mathsf{C}_1,...,\mathsf{C}_T)$ and $\mathsf{G}_0$.\\
(1.2) Measure $\mathsf{G}_0,\mathsf{G}_1,...,\mathsf{G}_p$ by the projection onto  $\frac{1}{\sqrt2}(|0^{p+1}\rangle+|1^{p+1}\rangle)^{\otimes u}$. If $(\frac{1}{\sqrt2} (|0^{p+1}\rangle+|1^{p+1}\rangle))^{\otimes u}$ is measured, then accept. Otherwise reject.
\item[(2)]
($b=1$: History test)\\
(2.1) Prover $i$ ($i=1,...,p$) sends $\mathsf{Me}_i$ and $\mathsf{G}_i$. Prover 0  sends $(\mathsf{V}, \mathsf{M}_1,..., \mathsf{M}_p$, $\mathsf{C}_1,...,\mathsf{C}_T)$ and $\mathsf{G}_0$.\\
(2.2) Measure $\mathsf{G}_0$ in the computational basis. Let $t$ be the outcome. Measure by $H_t$ in Figure \ref{LHp}. If the outcome is accepted by the protocol in Figure 5, then accept. Otherwise reject.
\end{enumerate}
\end{enumerate}
\hrulefill
\caption{LHI+ protocol. Note that we add one prover and the  GHZ test to the LHI protocol.}
\label{int}
\end{figure}
\subsection{Sketch of the technique of \cite{BJSW} and why it works for our protocol\label{Sketch}}
Finally, we give the zero-knowledge protocol for $\sf QMIP^*$ based on the LHI+ protocol, following the zero-knowledge protocol by Broadbent et al. \cite{BJSW}. In this subsection, we roughly sketch the quantum zero-knowledge protocol by Broadbent et al.\cite{BJSW} and why the technique of \cite{BJSW} works for our protocol.
The result of \cite{BJSW} consists of following ingredients;
\begin{enumerate}
\item The verifier tries to verify a  restricted form of the Local Hamiltonian problem, called the Clifford Hamiltonian problem, which is shown to be $\sf{QMA}$ hard. To this end, the verifier only has to measure only Clifford Hamiltonians.
\item The honest prover encodes the witness of the Clifford Hamiltonian problem by a CSS code \cite{NS1}, a quantum one time pad and a permutation. The quantum one time pad and the permutation are the secret key of the encoding.
\item The prover sends the encoded witness and the commitment of the key of the encoding. The verifier measures the encoded witness by one of Hamiltonians and sends the output to the prover. The prover proves that the output of the verifier's measurement corresponds to a yes output of the original Clifford Hamiltonian problem by a zero-knowledge protocol for NP \cite{W2}. This correspondence critically uses the restriction on Hamiltonians and  the transversality for Clifford gates which is a characteristic of CSS codes. 
\item The malicious verifier's circuits can be replaced by simulators by the assumption of the existence of commitment schemes, and after the replacement of the verifier's circuits, the witness state can be replaced by a state preparable in polynomial time. 
\end{enumerate}
If the malicious verifier of the LHI protocol can send only the honest query, the analogues of the above items in our case are as follows;
\begin{enumerate}
\item The verifier of the LHI protocol also measures by only Clifford Hamiltonians.
\item The encoding step of each prover does not depend on the other provers, and hence similar encoding can be done.
\item The corresponding step can be done by one of the provers directly.
\item If the query of the (malicious) verifier is honest and the qubits sent from the provers is honest, this step also can be done directly.
\end{enumerate}
As we cannot assume in general that the malicious verifier of the LHI protocol can send only the honest query,  we do not directly use the LHI protocol but the LHI+ protocol, which adds the GHZ test.

\subsection{Final zero-knowledge protocol: the technique of Broadbent et al. \cite{BJSW}\label{Fin;sec}}
We construct the zero-knowledge protocol based on the protocol in Figure \ref{int}. The technique is almost the same as that of Broadbent et al. \cite{BJSW} and the analysis is also almost the same. In this paper, we construct the protocol and explain why the technique of \cite{BJSW} works. The summary of the protocol is given in Figure \ref{Final} and we describe the protocol here.

Let $\mathsf{X}=(\mathsf{X}_1,...,\mathsf{X}_{i_0+p})$ be $(i_0+p)$ registers each of which has 1 qubit, where $i_0$ is the number of qubits of registers ($\mathsf{V},\mathsf{M}_1,...,\mathsf{M}_p,\mathsf{C}_1,...,\mathsf{C}_T$) in Figure \ref{int}. Prover 0 has $(\mathsf{X}_1,...,\mathsf{X}_{i_0})$ that corresponds to ($\mathsf{V},\mathsf{M}_1,...,\mathsf{M}_p,\mathsf{C}_1,...,\mathsf{C}_T$). For $i=1,...,p$, prover $i$ has $\mathsf{X}_{i_0+i}$ that corresponds to register $\mathsf{Me}_i$ in Figure \ref{int}. 
\subsubsection{Verifier's message\label{Vmes}}
First, the verifier selects $\overline{b}\in\{0,1\}$ uniformly at random and sends $\overline{b}$ to provers $1,...,p$. Note that $\overline{b}=0$ corresponds to the  GHZ test and $\overline{b}=1$ corresponds to the history test.  The verifier sends nothing to prover 0.
\subsubsection{Provers' encoding\label{Penc}}
Prover $i$ $(1\le i\le p)$ who received $\overline{b}=0$ sends $\mathsf{G}_i$. Prover $i$ who received $\overline{b}=1$ and prover 0 encode $\mathsf{X}$ in four steps.  Let $N$ be the length of a concatenated Steane code that the provers use. Here, a concatenated Steane code is a code such that 1 qubit is encoded by the 7-qubit Steane code several times repeatedly. In \cite{BJSW}, $N$ is taken to be an even power of 7 and bounded by a polynomial in instance size.
\begin{enumerate}
\item For $i=1,...,i_0+p$, $\mathsf{X}_i$ is encoded by the concatenated Steane code to the $N$ qubit state $(\mathsf{Y}_1^i,...,\mathsf{Y}_N^i)$.
\item For each $i$, the provers concatenate additional $N$ qubits to  $(\mathsf{Y}_1^i,...,\mathsf{Y}_N^i)$, each of which is chosen from $|0\rangle,|+\rangle$,$|\!\!\circlearrowright\rangle(:=\frac{1}{\sqrt{2}}(|0\rangle+i|1\rangle))$ uniformly at random. We call these $N$ qubits as trap qubits. At this point, each $\mathsf{X}_i$ is transfered to $2N$ qubits $(\mathsf{Y}_1^i,...,\mathsf{Y}_{2N}^i)$. The provers store the string $r=r_1\cdots r_{i_0+p}$, where $r_i\in\{0,+,\circlearrowright\}^N$, representing the trap qubits.
\item Select a permutation $\pi\in S_{2N}$ uniformly at random. All provers use the same permutation $\pi$. Permute $2N$ qubits $(\mathsf{Y}_1^i,...,\mathsf{Y}_{2N}^i)$ by $\pi$.
\item  Select strings $a=a_1\cdots a_{i_0+p}$, $b=b_1\cdots b_{i_0+p}$, where $a_1,... ,a_{i_0+p},b_1,...,b_{i_0+p}\in\{0,1\}^{2N}$, uniformly at random.  Apply quantum one time pad on $(\mathsf{Y}_1^i,...,\mathsf{Y}_{2N}^i)$. Namely, apply $X^{a_i}Z^{b_i}$ on  $(\mathsf{Y}_1^i,...,\mathsf{Y}_{2N}^i)$, where for strings $a_i=a^1_i\cdots a^{2N}_i,b^1_i\cdots b^{2N}_i$, $X^{a_i}Z^{b_i}$ means $X^{a^1_i}Z^{b^1_i}\otimes\cdots\otimes X^{a^{2N}_i}Z^{b^{2N}_i}$. Each of provers $0,2,...,p$ sends each encoded qubits and GHZ qubits. Prover 1 sends his encoded qubits, GHZ qubits and the commitment $z=commit(\pi,a,b,s)$. Here, $commit(\pi,a,b,s)$ is a string to commit the string $(\pi,a,b)$  with a random string $s$ using the commitment scheme assumed in Theorem 2.
\end{enumerate}
\begin{figure}
\hrulefill\\
Preparation \\\\
 Provers $0,1,...,p$ select and share a tuple $(r,\pi,a,b)$ uniformly at random, where $r=r_1\cdots r_{i_0+p}$ for $r_1,..., r_{i_0+p}\in\{0,+,\circlearrowright\}^N$, $\pi\in S_{2N}$, and $a=a_1\cdots a_{i_0+p}$, $b=b_1\cdots b_{i_0+p}$ for $a_1,... ,a_{i_0+p},b_1,...,b_{i_0+p}\in\{0,1\}^{2N}$.
The provers will use this random string to encode their qubits which correspond to the qubits that the provers in the LHI+ protocol would send. The encoding process is  described in Section \ref{Penc}.\\
\hrulefill\\
Protocol\\
Select $\overline{b}=0,1$ uniformly at random. Send $\overline{b}$ to prover $1,...,p$, and nothing to prover 0.

Do the following test depending on $\overline{b}=0,1$:
\begin{enumerate}
\item($\overline{b}=0$: GHZ test)
\begin{enumerate}
\item[(1.1)]  Prover $i$ ($i=1,...,p$) sends $\mathsf{G}_i$. Prover 0 encodes $(\mathsf {X}_0,...,\mathsf{X}_{i_0})$
$(=($$\mathsf{V}, \mathsf{M}_1,..., \mathsf{M}_p,\mathsf{C}_1,...,\mathsf{C}_T))$ as described in Section \ref{Penc} and sends these qubits and $\mathsf{G}_0$.
\item[(1.2)] Measure $\mathsf{G}_0,\mathsf{G}_1,...,\mathsf{G}_p$ by the projection onto  $\frac{1}{\sqrt2}(|0^{p+1}\rangle+|1^{p+1}\rangle)^{\otimes u}$. If $\frac{1}{\sqrt2} (|0^{p+1}\rangle+|1^{p+1}\rangle)^{\otimes u}$ is measured, then accept. Otherwise reject.
\end{enumerate}
\item($\overline{b}=1$: History test)
\begin{enumerate}
\item[(2.1)] Prover $i$ ($i=1,...,p$) encodes $\mathsf{X}_{i_0+i}(=\mathsf{Me}_i)$  described in Section \ref{Penc} and sends these qubits and $\mathsf{G}_i$. Prover 1 additionally sends $z=commit(\pi,a,b,s)$. Prover 0 encodes $\mathsf {X}_0,...,\mathsf{X}_{i_0}(=(\mathsf{V},\mathsf{M}_1,..., \mathsf{M}_p,\mathsf{C}_1,...,\mathsf{C}_T))$ as described in Section \ref{Penc} and sends these qubits and $\mathsf{G}_0$.
\item[(2.2.1)] The verifier measures $\mathsf{G}_0$ in the computational basis. Let $t$ be the outcome. 
\item[(2.2.2)] Prover 1  and the verifier engage in a coin-flipping protocol, choosing a two bit string $v$ uniformly at random. Here, $v$ specifies one of the Clifford gates of $H_t=\sum_v H'_{t,v}$. $H'_{t,v}$ is described in Section \ref{Vmea}.
\item[(2.2.3)]  The verifier applies the Clifford operation $C_{t,v}$ transversally to the qubits $(\mathsf{Y}_1^{i_1}$,...,$\mathsf{Y}_{2N}^{i_1})$, $(\mathsf{Y}_1^{i_2},...,\mathsf{Y}_{2N}^{i_2})$,..., $(\mathsf{Y}_1^{i_k},...,\mathsf{Y}_{2N}^{i_k})$ as described in Section \ref{Vmea} and measures all of these qubits in the computational basis, for $(i_1,...,i_k)$ being the indices of the qubits upon which the Hamiltonian term $H_{t,v}$ acts non-trivially. The verifier sends the output to prover 1.
\item[(2.2.4)] Prover 1 checks whether the output sent from the verifier is consistent with the trap qubits and Steane code (described in Section \ref{Pche}). If they are inconsistent, then abort.  If they are consistent, prover 1 proves that the output corresponds to a yes output of the LHI+ protocol by a zero-knowledge protocol of NP.
\end{enumerate}
\end{enumerate}
\hrulefill\\
%1.Verifier requires prover as the LHI+ protocol of Figure 8.\\\\
%2.Honest Provers encode the qubits except GHZ qubits which provers send in the LHI+ protocol as described in the main text using $(r,\pi,a,b)$. Provers send these qubits and GHZ qubits as verifier requires and $z=commit(\pi,a,b,s)$ to the verifier.\\\\
%3.Verifiers do the followings depending on whether the verifier select the GHZ test or the History test. \\\\
%(the GHZ test) The verifier measures the GHZ qubit by projection onto GHZ state and its complement if the verifier selects the  GHZ test.\\\\
%(History test) The verifier measures the GHZ qubits by the computational basis. Let the output be $t$. Using coin-flipping protocol with prover 1, determine constant size classical bits $v$. $v$ specifies one of the Clifford gates of $H_t=\sum_v H'_{t,v}$. The verifier measures the encoded qubits transversally by $H'_{t,v}$. The verifier send the output to prover 1.\\\\
%4.Prover 1 checks whether the output send from the verifier is consistent with the trap qubits and Steane code. If it is inconsistent, then abort.  If they are consistent, prover 1 prove that the output corresponds to yes output of the LHI+ protocol by zero-knowledge protocol of NP.\\
%\hrulefill
\caption{Summary of Zero-Knowledge Protocol for $\sf QMIP^*$}
\label{Final}
\end{figure}
\subsubsection{Verifier's measurement\label{Vmea}}
If the verifier sends $\overline{b}=0$ (the GHZ test), then he/she measures $\mathsf{G}_0,...,\mathsf{G}_p$ as in the LHI+ protocol.

If the verifier sends $\overline{b}=1$ (the history test), then the verifier measures the state received from provers as follows: $H_t$ is the summation of at most four Clifford gates, that is, $H_t=\sum_v H'_{t,v}$, where $H'_{t,v}$ is a Clifford gate and $v\in\{00,01,10,11\}$ (see Section \ref{CliH} in this paper and Section 2 in \cite{BJSW}). First, the verifier decides $t$ by measuring in the computational basis on $\mathsf{G}_0$ and $v$ by the coin-flipping protocol  with prover 1 (i.e., the honest prover 1 commits to random $s_1,s_2 \in\{0,1\}$, the honest verifier selects $s'_1,s'_2\in \{0,1\}$ at random, the prover reveals $s_1,s_2$, and the two participants agree that the random bits are $v_i = s_i \oplus s'_i$ for $i=1,2$). The verifier measures the encoded witness transversally by the projection $H'_{t,v}$. Measuring $\sf G_0$ in the computational basis decides $t$ and coin-flipping decides $v$. Here, the measurement by $H'_{t,v}$ means applying the corresponding Clifford gate $C_{t,v}$ to $(\mathsf{Y}_1^{i_1}$,...,$\mathsf{Y}_{2N}^{i_1})$, $(\mathsf{Y}_1^{i_2},...,\mathsf{Y}_{2N}^{i_2})$,..., $(\mathsf{Y}_1^{i_k},...,\mathsf{Y}_{2N}^{i_k})$ and applying the projection onto computational basis. Here, $i_1,\ldots i_k$ is the indices of qubits on which $H'_{t,v}$ non-trivially acts. The output is a string in $\{0,1\}^{2kN}$. Denote the output $u=u_{i_1}\cdots u_{i_k}$, where $u_{i_1},...,u_{i_k}\in\{0,1\}^{2N}$.
\subsubsection{Provers' Check and Response\label{Pche}}
The verifier sends $k$ strings in $\{0,1\}^{2N}$ to prover 1. There uniquely exist strings $y_i,z_i\in\{0,1\}^N$ such that $\pi (y_i z_i)=u_i$ for each $u_i\in\{0,1\}^N$. Prover 1 continues if the following conditions are satisfied.
\begin{enumerate}
\item For all $i\in\{i_1,\ldots,i_k\}$, $y_i\in D_N$, and  at least one $i\in\{i_1,\ldots,i_k\}$, $y_i\in D_N^1$, where $D_N$ is a subset of $\{0,1\}^N$ representing classical codewords of the concatenated Steane code and $D_N^1$ is a subset of $D_N$ corresponding to logical bit 1.
\item $\langle z_{i_1}\cdots z_{i_k}|C_{t,v}^{\otimes N}|r_{i_1}\cdots r_{i_k}\rangle\neq0$.
\end{enumerate}
We define the predicate $R_{t,v}(r,u,\pi)$ which takes the value 1 iff the above two conditions hold. Assume that $R_{t,v}(r,u,\pi)=1$ and prover 1 continues the protocol. For any $a=a_1\cdots a_{i_0+p},b=b_1\cdots b_{i_0+p}$, there uniquely exist $\alpha\in\{\pm i, \pm1\}$ and  $c_1,...,c_{i_0+p},d_1,...,d_{i_0+p}\in\{0,1\}^{2N}$ such that the next equation holds and can be computed in polynomial time in $N$.
\begin{equation}
\label{pred}
C_t^{\otimes 2N}(X^{a_1}Z^{b_1}\otimes\cdots\otimes X^{a_{i_0+p}}Z^{b_{i_0+p}})=\alpha(X^{c_1}Z^{d_1}\otimes\cdots\otimes X^{c_{i_0+p}}Z^{d_{i_0+p}}) C_t^{\otimes 2N} 
\end{equation}
%and $c_1,\ldots ,c_{i_0+p}$, $d_1,\ldots,d_{i_0+p}$ can be computed in polynomial time in $N$. 
That is, the following statement is a NP statement: there are a string $s$ and a tuple $(\pi, a,b,r)$ such that $commit(\pi,a,b,s)=z$ and $R_{t,v}(r,u\oplus c_1,...,c_{i_0+p},\pi)=1$, where $c$ is defined by Eq.(\ref{pred}). Prover 1 convinces the verifier of this statement by a zero-knowledge protocol of NP.

\subsection{Analysis}
As we mentioned before, the analysis is almost the same as \cite{BJSW}. Hence we explain only the main difference.
\subsubsection{Clifford gates and Clifford Hamiltonians\label{CliH}}
The zero-knowledge protocol by Broadbent et al. \cite{BJSW} critically uses the condition that all Hamiltonians consist of  Clifford gates and projections onto  computational basis.  Here we prove that the following Hamiltonians are the sums of Clifford Hamiltonian for $U_t=SWAP, CNOT$. $SWAP$ and $CNOT$ can be easily constructed by the product of unitary operators of  $\{H\otimes H ,\Lambda(P)\}$. Hence this step is not essentially necessary, but we prove this to simplify the LHI+ protocol. Now $H_t$ is defined as follows;
\begin{equation*}
H_t=|10\rangle_{\mathsf{C}_{t-1},\mathsf{C}_{t+1}}\langle10|_{\mathsf{C}_{t-1},\mathsf{C}_{t+1}}\otimes(|0\rangle\langle0|_{\mathsf{C}_t}\otimes \mathrm{Id}+|1\rangle\langle1|_{\mathsf{C}_t}\otimes \mathrm{Id}-|1\rangle\langle0|_{\mathsf{C}_t}\otimes U_t-|0\rangle\langle1|_{\mathsf{C}_t}\otimes U_t^\dagger)\end{equation*}
$H_t$ acts on $|\rangle_{\mathsf{C}_{t-1},\mathsf{C}_{t+1}}$ as a trivial projection onto computational basis, and hence we consider only other three qubits. For $U_t=SWAP$, $H_t$ is the sum of the projections onto the following vectors;\\\\
\{$|+\rangle(|01\rangle-|10\rangle)$, $|-\rangle|00\rangle$, $|-\rangle|11\rangle$, $|-\rangle(|01\rangle+|10\rangle)$\}.\\\\
Here, $|+\rangle(|01\rangle-|10\rangle)=|+\rangle ((\mathrm{Id} \otimes X) CNOT|-0\rangle)$, and $|-\rangle(|01\rangle+|10\rangle)=|-\rangle ((\mathrm{Id} \otimes X) CNOT|+0\rangle)$ and hence these projections are Clifford Hamiltonians. The control qubits of the CNOT in these operators is the left qubit.

For $U_t=CNOT$, $H_t$ is the sum of the projections onto the following vectors;\\\\
\{$|+1-\rangle$, $|-1+\rangle$, $|-0+\rangle$, $|-0-\rangle$\}.
\subsubsection{Soundness}
In the analysis of Broadbent et al.'s protocol \cite{BJSW}, the prover can prepare the state accepted by the original local Hamiltonian test with high probability  by decoding the encoded qubits which can pass their zero-knowledge protocol with high probability. The decoding process is applied to each logical qubit isolatedly. Hence, if the provers in Figure \ref{Final} pass with high probability,  then the provers can also pass the LHI protocol with high probability by decoding the qubits in Figure \ref{Final}.
\subsubsection{Zero-Knowledge}
Finally, we discuss zero-knowledge. Similarly to the proof of Theorem 4, we only have to consider the case that the malicious verifier requires provers $1,...,p$ to do the history test, and we can assume that prover $i$ who received $\overline{b}=1$ measures $\mathsf{G}_i$ in the computational basis. In the case of the history test, honest provers send the state that depends on the uniform random variable $t\in[{2^u}]$. The state may depend on $t$, but the analysis of \cite{BJSW} showed that there is only negligible change of the outputs of the simulator if the honest measurement on the state can pass the prover's check with high probability.

%\subsection{Short Summary of the Final Zero Knowledge Protocol}
%In this subsection, we sketch the transformation from the LHI+ protocol to the final zero-knowledge protocol. The whole protocol and the proof of the validity of the protocol are in Appendix B.3. The transformation mainly change the history test as follows;
%\begin{enumerate}
%\item Each provers encodes each qubit (except for the register $\mathsf{G}_i$) which the prover sends in the history test by a concatenated CSS code, add random qubits and apply a quantum one time pad and a random permutation on the encoded qubits. Additionally prover 1 sends the commitment of the description of the encoding 
%\item The verifier apply the Clifford Hamiltonian on the qubits send from teh provers. The verifier sends the output prover 1.
%\item
%\end{enumerate}
\section{Conclusion}
There are obvious open problems:  whether there exist the statistical/perfect zero-knowledge systems of $\mathsf{QMIP^*}$. One possible method is the algebraic technique. In quantum complexity theory, the technique of enforcing algebraic structures on the strategy of provers is recently applied to investigate the power of $\mathsf{MIP^*}$  (to prove $\mathsf{NEXP}$ $\subseteq\mathsf{MIP^*}$\cite{IV,NV,Vid}, to construct short proofs for $\sf{QMA}$ with large completeness-soundness gap \cite{NV2}, and to prove $\mathsf{NEXP}$ $\subseteq\mathsf{MIP^*}$ with zero-knowledge \cite{CFGS}, for example). Extensions of such algebraic methods to history states of $\mathrm{QMIP^*}$ protocols may enable perfect zero-knowledge systems for $\mathsf{QMIP^*}$.

%Improvement of parameters also remains open, but much of them is unsolved even for $\mathrm{MIP^*\ or\ QMIP^*}$. %Here we note on a problem that is resolved for $\mathsf{QMIP^*}$, but is remained for $\mathrm{(HV)QMZK^*}$. We can not prove that the method of Theorem 1.4  of \cite{KKMV} preserves zero-knowledge, hence whether $\mathsf{HVQMZK^*}$ with perfect completeness equals to whole $\mathsf{HVQMZK^*}$ or not remains open. 

The parameters will not be optimal. Though most improvements of parameters will directly follow from improvements of $\mathsf{QMIP^*}$ protocols without zero-knowledge condition, we note an important problem related to zero-knowledge. Parallel repetition is a  direct tool to improve completeness/soundness gap. Parallel repetition of zero-knowledge protocols, however, may not preserve zero-knowledge even in single-prover classical zero-knowledge systems (\cite{Gol}, Section 4.5.4). Hence it might be difficult to improve completeness/soundness gap preserving the number of turns by parallel repetition.
%The effect of restricting the verifier to classical computation and communication under the zero-knowledge condition is non-trivial. We discuss two transformations from $\rm{QMIP^*} $ protocols to $\rm{MIP^*}$ protocols; of Reichardt et al. \cite{RUV} and of Ji \cite{Ji1}. The transformation of \cite{RUV} seems to preserve zero-knowledge since the commands of the malicious verifier in the $\rm{MIP^*}$ protocol directly correspond to the acts of the verifier in the original $\rm{QMIP^*}$ protocols. This  transformation needs two turns for each of the verifier's gates, and hence does not preserve the number of turns. On the other hand, the transformation  of \cite{Ji1} seems not to preserve zero-knowledge. Hence more effective transformations from $\rm{QMIP^*} $ protocols to $\rm{MIP^*}$ protocols with zero-knowledge remain open.
%Another problem in other directions is to find any problem in $\mathsf{QMZK^*}$. It can compute $\mathsf{NEXP}$ and it will be difficult to find any problem which is not in $\mathsf{NEXP}$, but zero-knowledge protocols of which provers need only less computational power than a prover of (single-prover) zero-knowledge system may exist.

Finally, we believe that LHI protocols of interactive proofs would be a powerful tool which makes much previous research for Local Hamiltonian problems applicable to interactive proof systems. %For example, the protocol for Local Hamiltonian of \cite{FV} will be directly extended to LHI protocols and similar results of \cite{Ji1} with fewer provers will follows, though the verifier of such a protocol is quantum.
\subsection*{Acknowledgments}
The author is grateful to Prof. Nishimura for useful discussion, careful reading and heavy revision of this paper.
\newpage

\appendix

\section{Elimination of the honest condition without new provers}
In this section, we eliminate the honest condition without adding extra provers. We use the technique that Kobayashi \cite{Ko} used to eliminate the honest condition of single-prover quantum zero-knowledge systems depending on the rewinding by Watrous \cite{W2}. We give the protocol in Figure \ref{3gz}, which is transformed from Figure \ref{3turn2} similarly to the transformation from Figure \ref{2turn} to Figure \ref{2gz}, is zero-knowledge even to the malicious verifier by a direct construction of a simulator. (The difference between Figure \ref{3turn2} and Figure \ref{2turn} is the number of turns and the number of provers.) The simulator is described in Figure \ref{3sim}.
\begin{proof}[Proof of Zero Knowledge (sketch)]
Similarly to the proof of Theorem 3, we can assume that the verifier sends the same $b=0$ or $1$ to all provers due to the GHZ test.

After that, the analysis is almost the same as the one by Kobayashi \cite{Ko}, who used the rewinding \cite{W2}.
Denote the malicious verifier's first circuit by $V'_1$. We can denote the state in $(\mathsf{V,M,C})$ by $\Psi_V(x,2)=\frac{1}{2}(|0\rangle\langle0|_C\otimes\sigma_0+|1\rangle\langle1|_C\otimes\sigma_1)$. Here, $\sigma_0,\sigma_1$ are density operators of $\mathsf{V,M}$ and $|0\rangle\langle0|_{\mathsf C}, |1\rangle\langle1|_{\mathsf C}$ are one qubit of $\mathsf C$. (the amplitudes of $|0\rangle\langle0|,|1\rangle\langle1|$ may not be exactly equal since we do not assume perfect zero-knowledge, but the difference is negligible since we assume computational quantum zero-knowledge.) In addition, $\rm Tr_M\sigma_0=\rm Tr_M\sigma_1$ holds. If the output of the measurement of $\mathsf D$ is $0$, then the state in $(\mathsf{I,A,V,M})$ is indistinguishable from the state of the malicious verifier after the third turn. The probability that the output of the measurement of $\mathsf{D}=0,1$ is almost equally $1/2$ since $\mathsf{C}=0,1$ is independent from $\mathsf{C'}$. Hence the rewinding \cite{Ko,W2} works.
\end{proof}
\begin{figure}
\hrulefill\\
Let $V_1,V_2$ be the circuit that the verifier of a three-turn protocol for $\mathsf{HVQMZK^*}(p,3,1-\epsilon, 1-\delta)$ applies after receiving message registers $(\mathsf{M}_1,\mathsf{M}_2,...,\mathsf{M}_p)$ from provers $1,...,p$, respectively, at the first turn and the third turn. The public coin protocol is the following;
\begin{enumerate}
\item Receive $\mathsf V$ from prover 1.
\item Select $b\in\{0,1\}$ uniformly at random.  Send $b$ to all provers.
\item Do the following tests based on $b=0,1$;
\begin{enumerate}
\item ($b=0$: Forward test)\\
Receive $\mathsf {M}_i$ from prover $i$, for $1 \le i \le p$. Apply $V_2$ to $\mathsf{(V,M_1 ,...,M_p)}$. If the state in $\mathsf{(V,M_1 ,...,M_k )}$ is accepted by the original protocol, then accept. Reject otherwise.
\item ($b=1$: Backward test)\\
 Receive $\mathsf {M}_i$ from prover $i$, for $1 \le i \le p$. Apply $(V_1)^\dagger$ to $\mathsf{(V,M_1 ,...,M_p)}$. If all qubits in $\mathsf{(V)}$ are $|0\rangle$, then accept. Reject otherwise.
\end{enumerate}
\end{enumerate}
\hrulefill
\caption{Public coin protocol for a three-turn quantum interactive proof.}
\label{3turn2}
\end{figure}
\begin{figure}
\hrulefill\\
$\mathsf{V,M_i}$ are the same as Figure \ref{3turn2}. In addition, we use $\mathsf{G}_0, \mathsf{G}_1, \mathsf{G}_2,...,\mathsf{G}_p$, which are all 1 qubit register.
At the start, prover $i$ has $\mathsf{G}_i$.
\begin{enumerate}
\item Receive $\mathsf V$ from prover 1.
\item Select $b'\in\{0,1\}$ uniformly at random. Send $b'$  to provers $1,...,p$.
\item Do the following steps depending on $b'=0,1$;
\begin{enumerate}
\item ($b'=0$: GHZ test)\\
Prover $i$ sends $\mathsf{G}_i$, for $i=1,...,p$. Measure ($\mathsf{G}_0, \mathsf{G}_1, \mathsf{G}_2,...,\mathsf{G}_p$) by  the projection onto $\frac{1}{\sqrt2} (|0^{p+1}\rangle+|1^{p+1}\rangle)$. If $\frac{1}{\sqrt2} (|0^{p+1}\rangle+|1^{p+1}\rangle)$ is measured, then accept.
\item ($b'=1$: History test)\\
Prover $i$ sends $\mathsf {M}_i$ and $\mathsf{G}_i$, for $i=1,...,p$. Measure $\mathsf{G}_0$ in the computational basis. Let $b$ the outcome. The verifier simulates step 3 of the protocol in Figure \ref{3turn2} depending on $b$.
\end{enumerate}
\end{enumerate}
\hrulefill
\caption{General zero-knowledge protocol based on the public coin protocol in Figure 8.}
\label{3gz}
\end{figure}
\begin{figure}
\hrulefill\\
Registers used by the simulator.\\\\
$\mathsf{I}$ Input that the malicious verifier uses to get information.\\
$\mathsf{A'}$ Ancillas of the malicious verifier.\\
$\mathsf{C'}$ Coin qubit of the malicious verifier.\\
$\mathsf{V}$ Register that the (honest or malicious) verifier receives at the first turn.\\
$\mathsf{M}$ Register that the (honest or malicious) verifier receives at the third turn.\\
$\mathsf{A}$ Ancillas of the simulator.\\
$\mathsf{C}$ Coin qubit of the honest verifier.\\
$\mathsf{D}$ Register for checking whether $\mathsf{C,C'}$ equals or not.\\
\hrulefill\\
Construction of the simulator using the simulator for the honest verifier at the third turn $S_V$.\\\\
0. Prepare the input $\rho$ in $\mathsf{I}$ and $|0\rangle$ in other registers.\\
1. Apply $S_V$ to $(\mathsf{V,M,C,A})$.\\
2. Apply $ V'_1$ to $(\mathsf{I,A',V,C}')$.\\
3. Copy the XOR of $(\mathsf{C,C'})$ to $\mathsf D$.\\
4. Measure $\mathsf{D}$ in the computational basis. If the measurement outputs $0$, then output $(\mathsf{I,A',C',V,M})$. If the measurement outputs $1$, then apply ${V'_1}^\dagger$ and then apply ${S_V}^\dagger$ to $(\mathsf{V,M,C,A})$.\\
5. If $(\mathsf{A',C',V,M,A,C,D})$ are all $|0\rangle$, then apply the phase flip. Apply $S_V$ and then apply $V'_1$ . Output $(\mathsf{I,A,'C,'V,M})$.\\
\hrulefill
\caption{Simulator for the malicious verifier.}
\label{3sim}
\end{figure}

\end{document}